\newtheorem{theorem}{Theorem}
\newtheorem{lemma}[theorem]{Lemma}
\newtheorem{corollary}[theorem]{Corollary}
\newtheorem{conjecture}{Conjecture}
\newtheorem{question}{Question}
\newenvironment{proof}{{\noin \bf Proof}: }{\qed}
\newcommand{\cc}{\mathcal{C}}
\author{C.Duffy\affiliationmark{1}\thanks{Support from AARMS}
	\and T.F Lidbetter\affiliationmark{2} \thanks{Support from NSERC (USRA 2015)}
	\and M.E. Messinger\affiliationmark{2} \thanks{Support from NSERC (356119-2011)}
	\and R.J. Nowakowski\affiliationmark{3} \thanks{Support from NSERC (4139-2014)} }
\title[A Variation on Chip-Firing: the diffusion game]{A Variation on Chip-Firing: the diffusion game\thanks{The authors wish to acknowledge the referees for their corrections and suggestions. The clarity of the exposition herein is much improved as a result of the review process.}}
\affiliation{
	Department of Mathematics and Statistics, University of Saskatchewan\\
	Department of Mathematics and Computer Science, Mount Allison University\\
	Department of Mathematics and Statistics, Dalhousie University}
\keywords{chip-firing, discrete-time graph processes}
\begin{document}

\publicationdetails{20}{2018}{1}{4}{2039}
\maketitle

\begin{abstract}
We introduce a natural variant of the parallel chip-firing game, called the diffusion game.  Chips are initially assigned to vertices of a graph. At every step, all vertices  simultaneously send one chip to each neighbour with fewer chips.  As the dynamics of the parallel chip-firing game occur on a finite set the process is inherently periodic. However the diffusion game is not obviously periodic: even if~\begin{math}2|E(G)|\end{math} chips are assigned to vertices of graph~\begin{math}G\end{math}, there may exist time steps where some vertices have a negative number of chips.  We investigate the process, prove periodicity for a number of graph classes, and pose some questions for future research.\end{abstract}

\section{Introduction}\label{sec:intro}
Each vertex of a graph is assigned a finite integral number of chips. The chips are then redistributed via a stepwise parallel process where at each step all vertices simultaneously send one chip to each neighbour with fewer chips.  If the chips are thought of as molecules, then this process models  \emph{diffusion}: the movement of molecules from regions of high concentration to regions of low concentration.  As a result, the process described above is termed the \emph{ diffusion game}\footnote{We invite the reader to think of the chips as \EUR{1}  coins and consider if the rich always stay rich in this model.}.
 A natural first question is whether the process is necessarily periodic.  

The diffusion game is akin to the \emph{Discharging Method}, wherein a set of charges, rather than chips, are assigned to vertices and then moved around the graph according to a set of rules.  
The Discharging Method is a technique best known for its use in the proof of the famous \emph{ Four Colour Theorem}, but has been used in graph theoretic proofs for over a century.  An overview of this technique is given by \cite{Cranston}.
The diffusion game is also a natural variation of the well-studied \emph{ Chip-Firing Game}.  Chip-firing games on graphs provide a curious area of study for both their mathematical connections and their applications in modeling physical phenomena.  Applications include modeling traffic patterns, the Tutte polynomial, critical groups of graphs, stochastic processes and graph cleaning (see~\cite{cleaningnote,parallel,motivations} and~\cite{firstcleaning}).  In the (sequential) chip-firing game (see~\cite{originalChip}), every vertex in a graph is initially assigned a non-negative integral number of chips.  At each step, one vertex with at least degree-many chips is ``fired''; whereupon it sends one chip to each of its neighbours.   This firing of vertices continues as long as there is a vertex with at least degree-many chips.  The parallel chip-firing game has been studied by~\cite{parallelBitar,parallelKiwi} and \cite{parallelLevine}: at each step, \emph{ every} vertex with at least degree-many chips is fired simultaneously.  It is easy to see that in both the parallel chip-firing game and the diffusion game, the process is completely determined by the initial distribution of chips.  Hence, the term ``game'' is a  misnomer; however we refer to this new variation as the diffusion ``game" because of this close relationship to the parallel chip-firing game.  Since the total number of chips~\begin{math}\cc\end{math} remains constant and it is not possible for a vertex to have a negative number of chips, the dynamics for parallel chip-firing on a graph~\begin{math}G\end{math} occur in the finite set~\begin{math}\{0,1,\dots, \cc\}^{|V(G)|}\end{math} and so parallel chip-firing games eventually exhibit periodic behaviour.  A number of papers have studied the periodicity in parallel chip-firing and results are known about the periods of trees, cycles, complete graphs, and complete bipartite graphs (see~\cite{parallelBitar, DallAsta, parallelKiwi, Kominers, Jiang} and \cite{parallelLevine}). It is important to note that although~\cite{parallelBitar} conjectured that on a graph~\begin{math}G\end{math} any chip configuration is eventually periodic with period at most~\begin{math}|V(G)|\end{math}, \cite{parallelKiwi} provided an example of a graph with period exponential in~\begin{math}|V(G)|\end{math}.

We now formally define the diffusion game.   Let~\begin{math}G\end{math} be a graph with~\begin{math}V(G)~=~\{v_1,v_2,\dots,v_n\}\end{math}.  A chip configuration on~\begin{math}G\end{math} is a vector~\begin{math}c_t = (c_t(v_1), c_t(v_2),\dots,c_t(v_n)) \in \mathbb{Z}^n\end{math}, where~\begin{math}c_0(v_i)\end{math} denotes the number of chips at vertex~\begin{math}v_i\end{math} in the initial configuration and~\begin{math}c_t(v_i)\in\mathbb{Z}\end{math} is the number of chips on vertex~\begin{math}v_i\end{math} at step~\begin{math}t\end{math}.  We define the parallel dynamics as 
\begin{equation}\label{ez}c_{t+1}(v_i) = c_t(v_i) - \Big|\{v_j \in N(v_i)\hspace{0.005in}:\hspace{0.005in} c_t(v_i)>c_t(v_j)\}\Big| + \Big|\{v_j \in N(v_i) \hspace{0.005in}:\hspace{0.005in}c_t(v_i) < c_t(v_j)\}\Big|.\end{equation} 
We refer to this dynamic as \emph{firing}. We note that Equation~(\ref{ez}) is applied synchronously on~\begin{math}V(G)\end{math}.  
It is important to observe that~\begin{math}\sum_{v \in V(G)} c_0(v) =  \sum_{v \in V(G)} c_i(v)\end{math} for all~\begin{math}i \geq 0\end{math}. 
We further observe that the dynamics are unchanged if an additional constant number of chips is added to each vertex. As such we may assume without loss of generality that~\begin{math}\sum_{v \in V(G)} c_i(v) = \cc \geq 0\end{math} for all~\begin{math}i \geq 0\end{math}.

For neighbours~\begin{math}u,v \in V(G)\end{math} such that~\begin{math}c_{k-1}(u) > c_{k-1}(v)\end{math} we say that \emph{in the~\begin{math}k^{th}\end{math} firing~\begin{math}u\end{math} sends a chip to~\begin{math}v\end{math}} and \emph{~\begin{math}v\end{math} receives a chip from~\begin{math}u\end{math}}. When a chip configuration has negative entries the analogy of the process as a redistribution of  chips on a graph no longer holds.
In this  case we may consider a chip configuration to be a function that assigns an integer value to each vertex. Alternatively we may consider entries of the chip configuration to model charge as in \cite{Cranston}. The application of the parallel dynamics in Equation~(\ref{ez}) does not depend on the interpretation of chips as physical objects.  However, if the parallel dynamics necessarily lead to a periodic sequence of chip configurations, we may add a constant so that all~\begin{math}c_i(v) \geq 0\end{math} for all~\begin{math}v \in V\end{math} and all~\begin{math}i \geq 0\end{math} to restore the analogy. It is unknown, however, if the parallel dynamics always lead to a periodic process.

\begin{figure}[htb]
\[ \includegraphics[width=0.5\textwidth]{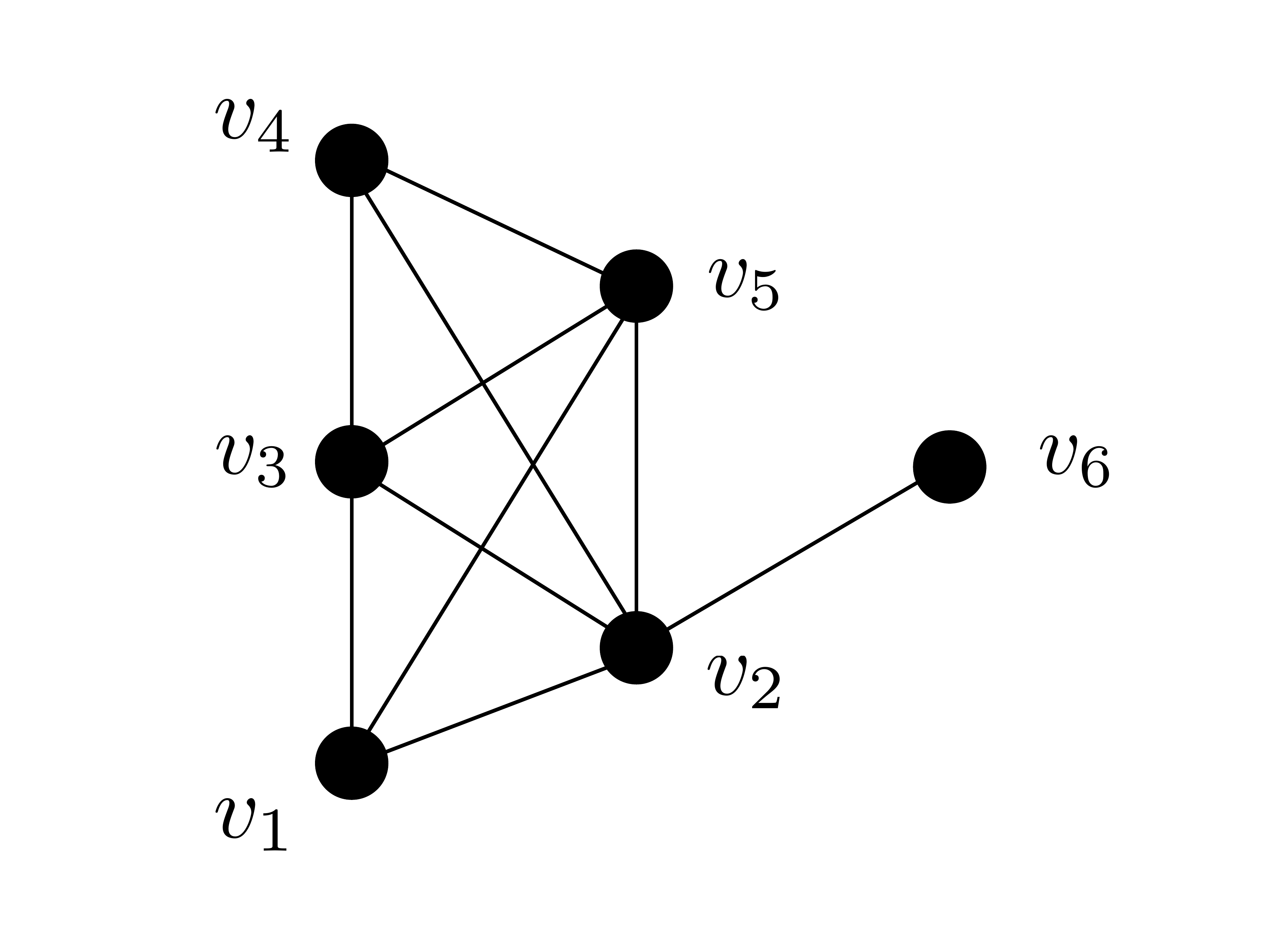}
\includegraphics[width=0.5\textwidth]{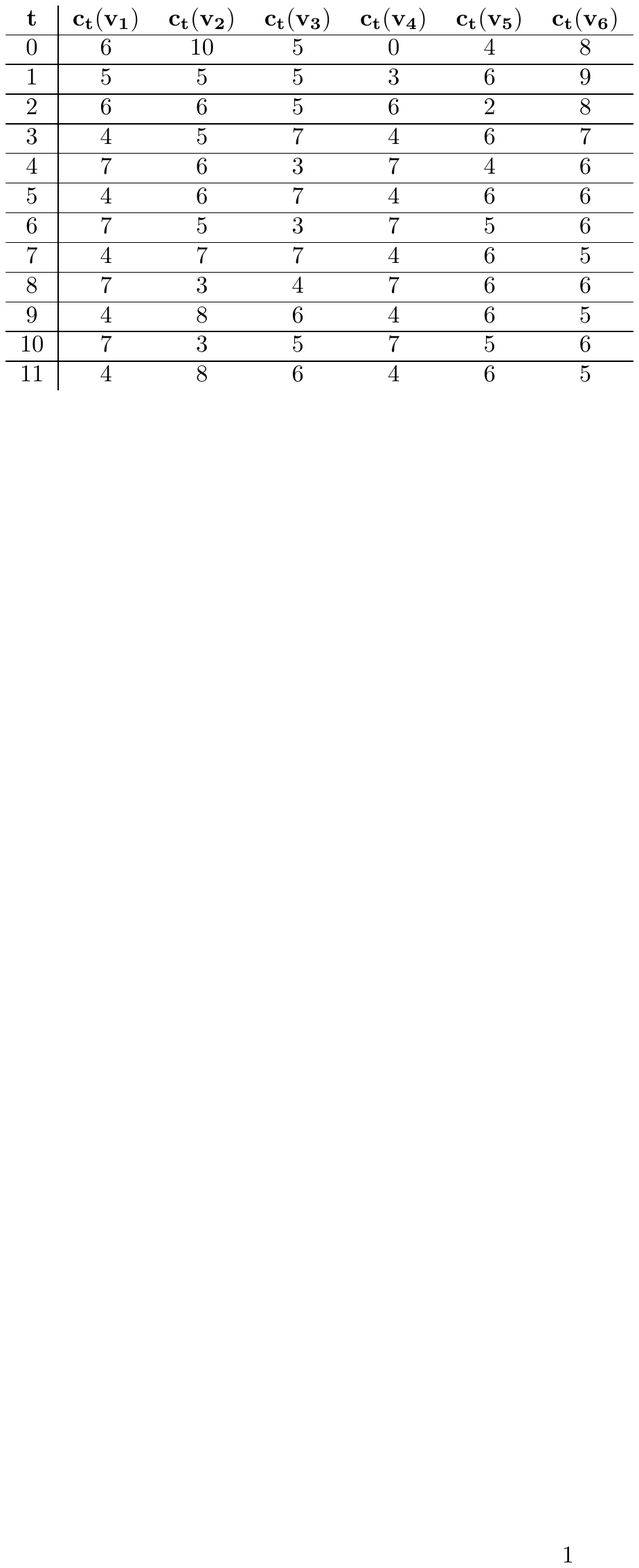} \]

\vspace{-0.25in}\caption{An example of the diffusion game.}
\label{FigXX}
\end{figure} 
Let~\begin{math}c_0\end{math} be an initial chip configuration on some graph~\begin{math}G\end{math}. We observe that if for some~\begin{math}p \in \mathbb{N}\end{math},~\begin{math}c_t = c_{t+p}\end{math}, then the dynamics imply~\begin{math}c_{t+q} = c_{t+p+q}\end{math} for any~\begin{math}q \in \mathbb{N}\end{math}.  We call the minimum such~\begin{math}p\end{math} the \emph{ period length} and given~\begin{math}p\end{math}, we call the minimum such~\begin{math}t\end{math} the \emph{ length of the pre-period}.  In this case, we say that \emph{~\begin{math}c_0\end{math} exhibits eventual periodic behaviour with period length~\begin{math}p\end{math} and pre-period length~\begin{math}t\end{math}}.

We observe that if~\begin{math}c_{t-1} = c_{t}\end{math} for any~\begin{math}t \geq 1\end{math}, then all vertices have the same number of chips (and no chips moved during the~\begin{math}t^{th}\end{math} firing).  Such situations can be thought of as steady-state configurations or stable configurations. We  refer to such configurations as \emph{ fixed}.  If a graph~\begin{math}G\end{math} with an initial configuration~\begin{math}c_0\end{math} eventually proceeds to a configuration~\begin{math}c_t\end{math} that is fixed, we say that~\begin{math}G\end{math} with initial configuration~\begin{math}c_0\end{math} is \emph{eventually fixed}.

As an example, consider the graph shown in Figure~\ref{FigXX} with initial configuration of chips~\begin{math}c_0 = (6,10,5,0,4,8)\end{math} is given in the top row of Figure~\ref{FigXX}.  In the first firing,~\begin{math}v_1\end{math} sends a chip to each of~\begin{math}v_3\end{math} and~\begin{math}v_5\end{math};~\begin{math}v_2\end{math} sends a chip to each of its neighbours;~\begin{math}v_3\end{math} sends a chip to~\begin{math}v_4\end{math} and~\begin{math}v_5\end{math}; and~\begin{math}v_5\end{math} sends a chip to~\begin{math}v_4\end{math}. Observe that since~\begin{math}v_6\end{math} and~\begin{math}v_4\end{math} each are the vertices with least value in their closed neighbourhoods, in the first firing they only receive chips. The resulting configuration is~\begin{math}(5,5,5,3,6,9)\end{math}. Observe that~\begin{math}c_9=c_{11}\end{math}. Thus with initial configuration~\begin{math}c_0 = (6,10,5,0,4,8)\end{math}, the chip configurations on this graph exhibit periodic behaviour with period length~\begin{math}2\end{math} and pre-period length~\begin{math}9\end{math}.

As a second example, consider the graph shown in Figure~\ref{figK}.  For the initial chip configuration, each vertex is assigned degree-many chips.  In the first firing, the vertices of degree~\begin{math}7\end{math} send one chip to each neighbouring leaf.  So at the end of step~\begin{math}1\end{math}, the vertices of degree~\begin{math}7\end{math} each have~\begin{math}3\end{math} chips and the leaves each have~\begin{math}2\end{math} chips.  In the second firing, the vertices of degree~\begin{math}7\end{math} each send one chip to each neighbouring leaf.  Then at the end of step~\begin{math}2\end{math}, the vertices of degree~\begin{math}7\end{math} each have~\begin{math}-1\end{math} chips and the leaves each have~\begin{math}3\end{math} chips.  (In the third firing, the leaves each send a chip to the degree~\begin{math}7\end{math} vertices we find the chip configurations exhibit periodic behaviour with pre-period length~\begin{math}1\end{math} and period length~\begin{math}2\end{math}).  Thus, even with using a total of~\begin{math}2|E(G)|\end{math} chips in the initial configuration, vertices may have a negative number of chips in some steps.  This is very different behaviour from the parallel chip-firing game, where at each step, every vertex has at least~\begin{math}0\end{math} chips and thus, the dynamics occur on a finite set and ensure periodicity.  These observations lead to a fundamental question.

\begin{figure}[htb]
 
\[\includegraphics[width=0.75\textwidth]{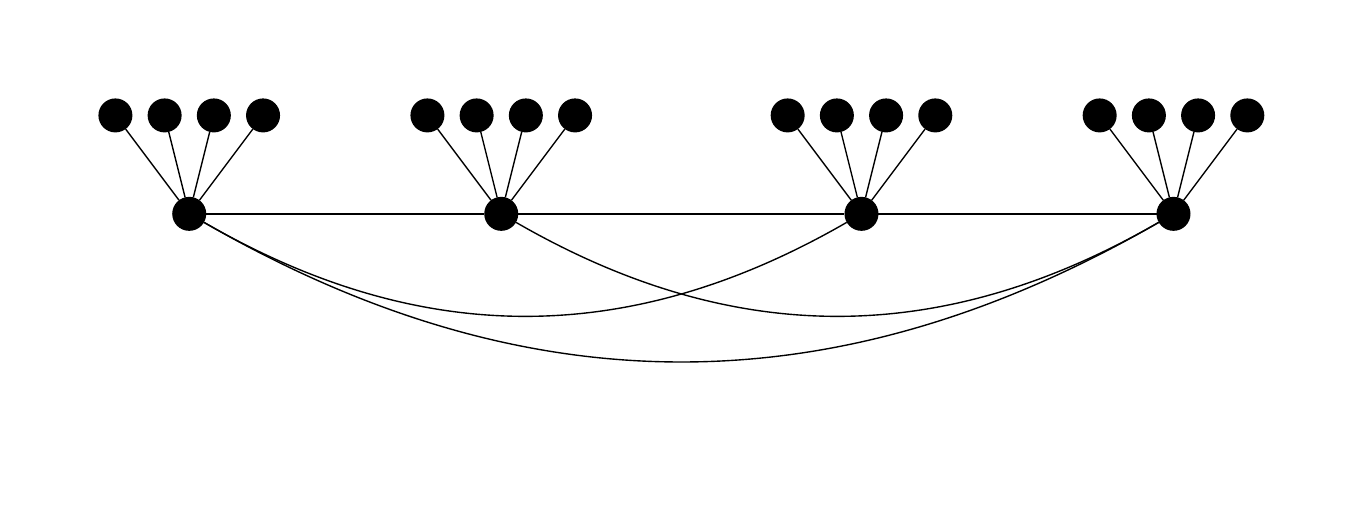}\]

\vspace{-0.4in}\caption{An example of a graph that has a chip configuration leading to vertices with negative chips.}

\label{figK}

\end{figure}

\begin{question}\label{q1} Let~\begin{math}G\end{math} be a finite  graph with an initial configuration.  Do the chip configurations on~\begin{math}G\end{math} eventually exhibit periodic behaviour?\end{question} 

Simulation of the process on a~\begin{math}50 \times 100\end{math} Cartesian grid where each vertex had a random initial value of between~\begin{math}1\end{math} and~\begin{math}200\end{math} chips gave a process that was eventually periodic with period~\begin{math}2\end{math} in each of~\begin{math}200\end{math} simulations.  In fact, every configuration and graph the authors considered was fixed or has period length~\begin{math}2\end{math} after some pre-period.  With this in mind, we say that a graph, together with an initial configuration is \emph{ tight} if the resulting process is eventually (\textit{i.e.,} after some pre-period) fixed or periodic with period length~\begin{math}2\end{math}.

\begin{conjecture}\label{conj} Every finite graph and initial configuration is tight. \end{conjecture}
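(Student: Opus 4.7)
My plan is to establish the conjecture in two stages: first, show the process is eventually periodic; second, argue that the period divides $2$.

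For the first stage I would seek a Lyapunov-type function that prevents chip values from escaping to infinity. Without loss of generality set $\cc = 0$, so the mean value is always zero. Consider the quadratic potential $\Phi_t = \sum_{v \in V(G)} c_t(v)^2$ and expand $\Phi_{t+1} - \Phi_t$ using the update rule in Equation~(\ref{ez}). The second-order term $\sum_v (c_{t+1}(v) - c_t(v))^2$ is uniformly bounded, since $|c_{t+1}(v)-c_t(v)| \leq \deg(v)$. The first-order term $2\sum_v c_t(v)(c_{t+1}(v) - c_t(v))$ is a discrete analogue of $-\int|\nabla c|^2$ and ought to be strongly non-positive when the configuration is far from constant, since chips flow from high $c_t$ to low $c_t$. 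The goal is to show that the first-order term dominates negatively once $\max_v |c_t(v)|$ is sufficiently large, forcing $\Phi_t$ to remain bounded. Once $\Phi_t$ is bounded, the orbit lies in a finite subset of $\mathbb{Z}^{|V(G)|}$ and eventual periodicity follows by pigeonhole.

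For the second stage, assume the orbit has period $p$ and aim to show $p \leq 2$. For each edge $uv$ and each time step $t$, record $f_t(uv) \in \{-1,0,+1\}$ indicating the direction of the chip transfer along $uv$ at step $t$. Chip conservation at each vertex $v$ over a full period yields the constraint $\sum_{t=0}^{p-1}\sum_{w \in N(v)} f_t(vw) = 0$, together with the sign condition $f_t(uv)\cdot(c_t(u)-c_t(v)) \geq 0$. Summing $(c_t(u)-c_t(v))f_t(uv)$ around the periodic orbit telescopes, because $c_{t+1}(v)-c_t(v) = \sum_{w \in N(v)} f_t(vw)$ and $c_p = c_0$. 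I would try to exploit the resulting identity to show that on each edge, $f_t(uv)$ must satisfy $f_{t+1}(uv) = -f_t(uv)$ (yielding period $2$) or $f_t(uv) \equiv 0$ (yielding a fixed configuration); any longer sign pattern would contradict the combined conservation and sign constraints.

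The main obstacle is overwhelmingly the boundedness step. Figure~\ref{figK} already shows that chip values can swing negative in a single step due to the overshoot inherent in parallel updates, and there is no a priori reason such swings cannot grow unboundedly on some pathological graph; in particular, both $\Phi_t$ and the edge-energy $\sum_{uv\in E(G)}(c_t(u)-c_t(v))^2$ can transiently increase. Overcoming this likely requires either a more refined potential that averages over several steps, or a spectral argument using the Laplacian of $G$ (with a discrete Poincar\'e-type inequality) to show that any increase in $\Phi_t$ is matched by a subsequent decrease on a timescale depending on the algebraic connectivity. Without such control, even Question~\ref{q1} remains unresolved; so essentially all of the difficulty in proving Conjecture~\ref{conj} is concentrated in showing that the discrete, sign-based update cannot indefinitely amplify its own overshoots.
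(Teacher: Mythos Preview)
The statement you are attempting is Conjecture~\ref{conj}, which the paper leaves \emph{open}: there is no proof in the paper to compare against. The paper establishes tightness only for specific families (stars in full generality; paths with full-degree configuration; $K_n$ with two-valued configurations; bipartite mill-ponds), and in Section~\ref{sec:questions} explicitly reiterates that both Question~\ref{q1} and Conjecture~\ref{conj} remain unresolved in general.

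Your proposal is a research outline, not a proof, and you correctly identify where it breaks. In stage one, the assertion that the first-order term of $\Phi_{t+1}-\Phi_t$ ``dominates negatively once $\max_v|c_t(v)|$ is sufficiently large'' is exactly the missing lemma: the update depends only on the \emph{signs} of $c_t(u)-c_t(v)$, not their magnitudes, so the first-order term is $-2\sum_{uv\in E}|c_t(u)-c_t(v)|$, which is linear in the spread, while the second-order term is bounded by $\sum_v \deg(v)^2$. This does give $\Phi_{t+1}\le\Phi_t$ whenever the total edge-variation exceeds a fixed constant, but that alone does not bound $\Phi_t$, because $\Phi_t$ can be arbitrarily large while $\sum_{uv}|c_t(u)-c_t(v)|$ stays small (concentrate the spread on a single edge of a long path). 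You would need a Poincar\'e-type inequality tying $\Phi_t$ to the edge-variation, and the constant in any such inequality depends on $G$ in a way that may not beat the overshoot; this is precisely the obstacle you flag.

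For stage two, note that the paper's Theorem~\ref{thm:propertyplus} already proves that tightness is \emph{equivalent} to the existence of some $t$ at which every edge-flow reverses (your $f_{t+1}(uv)=-f_t(uv)$ condition is property plus). So your second stage, if completed, would reprove that theorem and then still need to show property plus is eventually attained---which is again the open conjecture. The telescoping identity you describe is correct but by itself only recovers the trivial fact that net flow over a period is zero; extracting $p\le 2$ from it would require ruling out longer sign patterns on individual edges, and nothing in the conservation constraints forbids, say, a pattern $(+1,+1,-1,-1)$ on some edge within a period-$4$ orbit. No such orbit is known, but your argument does not exclude it.
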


In Section~\ref{sec:finitePeriod}, we partially answer Question~\ref{q1} affirmatively for particular families of graphs. 
As evidence for Conjecture~\ref{conj},  in Section~\ref{sec:period2}, we consider some natural initial distributions of chips and show that some families of graphs do eventually exhibit periodic behaviour with period length~\begin{math}2\end{math}.  Finally, in Section~\ref{sec:questions}, we conclude with a brief exploration of some of the open  questions that remain.
 
Given any finite graph and fixed number~\begin{math}\mathcal{C}\end{math} of chips, we can consider an auxiliary directed graph~\begin{math}A(G,\mathcal{C})\end{math} that encodes the chip configurations.  Each vertex of~\begin{math}A(G,\mathcal{C})\end{math} is a chip configuration where the total number of chips is~\begin{math}\mathcal{C}\end{math}.  For a pair of vertices~\begin{math}u,v \in V(A(G,\mathcal{C}))\end{math}, there is an arc from~\begin{math}u\end{math} to~\begin{math}v\end{math} if applying the dynamics of the chip configuration corresponding to~\begin{math}u\end{math} yields the configuration at~\begin{math}v\end{math}.  Observe that since the process is deterministic, the out-degree of any vertex is~\begin{math}1\end{math}, but the in-degree can be arbitrary.  Consider~\begin{math}P_3\end{math}, for example, and let~\begin{math}c_{t} = (c_t(x),c_t(y),c_t(z))\end{math} where~\begin{math}\deg(y)=2\end{math}.  If~\begin{math}c_t = (1,1,2)\end{math} then~\begin{math}c_{t-1}\end{math} could have been~\begin{math}(0,1,3)\end{math},~\begin{math}(0,2,2)\end{math}, or~\begin{math}(0,3,1)\end{math}; thus in the auxiliary graph~\begin{math}A(P_3,4)\end{math}, the vertex corresponding to configuration~\begin{math}(1,1,2)\end{math} has three parent vertices.  Note that Question~\ref{q1} is equivalent to asking if every directed path in~\begin{math}A(G,\mathcal{C})\end{math} is finite.  Conjecture~\ref{conj} is equivalent to saying that each strongly connected component of~\begin{math}A(G,\mathcal{C})\end{math} has cardinality~\begin{math}1\end{math} or~\begin{math}2\end{math}.

\section{General Periodicity Results}\label{sec:finitePeriod}

In this section, we affirm Question~\ref{q1} for some  well-known classes of graphs.  To do this, we show that the values for~\begin{math}c_t = (c_t(v_1),c_t(v_2),\dots,c_t(v_n))\end{math} are in a finite set.  In other words, we show for every integer~\begin{math}t \geq 0\end{math}, the number of chips on any vertex in such graphs is always in a finite interval. As there will be a finite number of distinct configurations in the process and~\begin{math}c_{t+1}\end{math} is determined by~\begin{math}c_t\end{math}, and as the diffusion game continues indefinitely, the configurations must eventually exhibit periodic behaviour. This approach will be used for paths, cycles, wheels, complete graphs, and complete bipartite graphs.  We begin with a few definitions and preliminary results in Section~\ref{sub:pre}.

\subsection{Definitions and Preliminary Results}\label{sub:pre}

Let~\begin{math}G\end{math} be a graph with initial configuration~\begin{math}c_0\end{math}.  With respect to~\begin{math}c_0\end{math}, we say that~\begin{math}v \in V(G)\end{math} is \emph{ bounded above} if~\begin{math}\mbox{there exists }~M\in\mathbb{Z}\end{math} such that for every integer~\begin{math}t \geq 0\end{math}, we have~\begin{math}c_t(v) \leq M\end{math}. Similarly, with respect to~\begin{math}c_0\end{math}, we say that~\begin{math}v\end{math} is \emph{ bounded below} if~\begin{math}\mbox{there exists }~m \in \mathbb{Z}\end{math} such that for every integer~\begin{math}t \geq 0\end{math}, we have~\begin{math}c_t(v) \geq m\end{math}.  If~\begin{math}v\end{math} is bounded both above and below, then~\begin{math}v\end{math} is \emph{ bounded}; otherwise~\begin{math}v\end{math} is \emph{ unbounded}.

\begin{lemma} \label{lem:boundedvalue} Let~\begin{math}G\end{math} be a graph with an initial configuration.  Every vertex in~\begin{math}G\end{math} is bounded if and only if the chip configurations eventually exhibit periodic behaviour.  
\end{lemma}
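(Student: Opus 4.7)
The statement is a clean equivalence between two finiteness conditions, so the proof should be short and symmetric.

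The plan is to handle each direction by reducing to a counting argument. For the backward direction, suppose the process is eventually periodic with pre-period $t_0$ and period $p$. Then the set of configurations $\{c_t : t \ge 0\}$ is the union of the finite initial segment $\{c_0, c_1, \dots, c_{t_0 + p - 1}\}$ with the repeated tail $\{c_{t_0}, c_{t_0+1}, \dots, c_{t_0 + p - 1}\}$, hence finite. Projecting onto the $i$-th coordinate gives a finite set of integers $\{c_t(v_i) : t \ge 0\}$, so each $v_i$ is bounded both above (by the maximum of the set) and below (by the minimum).

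For the forward direction, suppose every vertex is bounded: for each $i$ there exist integers $m_i \le M_i$ with $m_i \le c_t(v_i) \le M_i$ for all $t \ge 0$. Then for every $t$,
\begin{equation*}
c_t \in S := \prod_{i=1}^{n} \{m_i, m_i + 1, \dots, M_i\},
\end{equation*}
which is a finite subset of $\mathbb{Z}^n$ of size $\prod_i (M_i - m_i + 1)$. By the pigeonhole principle, there exist $0 \le s < t$ with $c_s = c_t$. Since the dynamics in Equation~(\ref{ez}) are deterministic (the value of $c_{k+1}$ depends only on $c_k$), induction on $q$ gives $c_{s+q} = c_{t+q}$ for all $q \ge 0$, so the configurations exhibit periodic behaviour with period length dividing $t-s$.

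I do not expect any real obstacle here: both directions reduce immediately to ``finitely many integer vectors in a bounded box'' plus determinism. The only thing to be mildly careful about is that the definition of bounded allows the bounds $m, M$ to depend on the vertex, but since there are only finitely many vertices, the Cartesian product $S$ remains finite. No appeal to the specific form of Equation~(\ref{ez}) is needed beyond its being a deterministic update rule on $\mathbb{Z}^n$.
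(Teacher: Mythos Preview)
Your proof is correct and follows essentially the same approach as the paper: boundedness puts all configurations in a finite box, so pigeonhole plus determinism forces periodicity, and conversely a periodic process visits only finitely many configurations, hence each coordinate is bounded. The paper phrases the backward direction as a contrapositive (an unbounded vertex takes infinitely many values, so the process cannot be periodic), but this is the same argument in slightly different clothing.
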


\begin{proof} Let~\begin{math}G\end{math} be a graph with initial configuration~\begin{math}c_0\end{math}.  If every vertex is bounded,~\begin{math}\mbox{there exists }\end{math}~\begin{math}M \in \mathbb{Z}^+\end{math} such that~\begin{math}-M \leq c_t(v) \leq M\end{math} for all~\begin{math}v \in V(G)\end{math}.  Thus, the values for~\begin{math}c_t\end{math} are in a finite set and so the chip configurations must eventually exhibit periodic behaviour.

For the other direction, suppose vertex~\begin{math}v\end{math} is not bounded above.  Then for each~\begin{math}M \in \mathbb{Z}\end{math},~\begin{math}\mbox{there exists }\end{math}~\begin{math}t \geq 0\end{math} such that~\begin{math}c_t(v) > M\end{math}.  Since~\begin{math}v\end{math} takes on an infinite number of positive values, the chip configurations will never be periodic.\end{proof}

In fact, the chip configurations will eventually be periodic if there is a bound on the chip difference between every pair of adjacent vertices. We use the notation~\begin{math}u\sim v\end{math} to indicate that vertices~\begin{math}u\end{math} and~\begin{math}v\end{math} are adjacent.

\begin{lemma} \label{lem:boundededge} Let~\begin{math}G\end{math} be a graph with an initial configuration.  If there exists a constant~\begin{math}M\end{math} such that~\begin{displaymath}|c_t(u) -c_t(v)| \leq M\end{displaymath} for all~\begin{math}u \sim v\end{math} and all~\begin{math}t \geq 0\end{math}, then the chip configurations will eventually exhibit periodic behaviour. \end{lemma}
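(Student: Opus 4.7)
The plan is to reduce this to Lemma~\ref{lem:boundedvalue} by showing that the edge-wise bound propagates to a uniform bound on every vertex. Since the dynamics (\ref{ez}) only move chips between neighbours, the sum of chips on each connected component of~\begin{math}G\end{math} is invariant in~\begin{math}t\end{math}. I will therefore assume without loss of generality that~\begin{math}G\end{math} is connected with~\begin{math}n\end{math} vertices and total chip count~\begin{math}\mathcal{C}\end{math}; the general case follows by applying the argument to each component separately.

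First I would use the hypothesis to control the spread of~\begin{math}c_t\end{math} across~\begin{math}G\end{math}. Let~\begin{math}D\end{math} denote the diameter of~\begin{math}G\end{math}. For any two vertices~\begin{math}u,v \in V(G)\end{math}, choose a shortest~\begin{math}uv\end{math}-path~\begin{math}u = w_0, w_1, \dots, w_k = v\end{math} with~\begin{math}k \leq D\end{math}. Applying the triangle inequality together with the hypothesis along consecutive edges yields
\begin{equation*}
|c_t(u) - c_t(v)| \leq \sum_{i=1}^{k} |c_t(w_{i-1}) - c_t(w_i)| \leq kM \leq DM.
\end{equation*}
Thus for each~\begin{math}t \geq 0\end{math} the values~\begin{math}\{c_t(v) : v \in V(G)\}\end{math} lie in an interval of length at most~\begin{math}DM\end{math}.

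Next I would combine this with the conservation identity~\begin{math}\sum_{v \in V(G)} c_t(v) = \mathcal{C}\end{math}. Since every~\begin{math}c_t(v)\end{math} differs from~\begin{math}c_t(u)\end{math} (for any fixed~\begin{math}u\end{math}) by at most~\begin{math}DM\end{math}, summing over~\begin{math}v\end{math} gives
\begin{equation*}
\mathcal{C} = \sum_{v \in V(G)} c_t(v) \geq n\cdot c_t(u) - n \cdot DM, \qquad \mathcal{C} \leq n \cdot c_t(u) + n \cdot DM,
\end{equation*}
so that~\begin{math}\mathcal{C}/n - DM \leq c_t(u) \leq \mathcal{C}/n + DM\end{math} for every~\begin{math}u \in V(G)\end{math} and every~\begin{math}t \geq 0\end{math}. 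Hence every vertex of~\begin{math}G\end{math} is bounded, and Lemma~\ref{lem:boundedvalue} supplies the desired eventual periodicity.

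The main step is really just the propagation argument; no serious obstacle arises because the hypothesis is exactly the local Lipschitz-type condition needed for a path-telescoping bound. The only subtlety to be careful about is justifying the reduction to a connected graph, which is why I would explicitly note that the dynamics preserve the chip sum on each component.
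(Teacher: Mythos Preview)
Your proof is correct and follows essentially the same route as the paper: propagate the edge bound along a shortest path to get a diameter-times-$M$ spread, anchor via the conserved chip total $\mathcal{C}$, and invoke Lemma~\ref{lem:boundedvalue}. Your version is slightly more explicit about the telescoping step and the reduction to connected components, which the paper's proof glosses over by writing ``clearly'' and using $\operatorname{diam}(G)$ directly.
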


\begin{proof} Let~\begin{math}G=(V,E)\end{math} be a graph with an initial chip configuration and let~\begin{math}|V(G)|~=~n\end{math}.  Assume, without loss of generality, that~\begin{math}\sum_{v \in V(G)} c_0(v) = \cc \geq 0\end{math} and recall that the total number of chips,~\begin{math}\cc\end{math}, does not change when the parallel dynamics are applied.  Clearly for all~\begin{math}t \geq 0\end{math}, 
	\begin{equation}min_{v \in V} c_t(v) \leq \cc/n \leq \max_{v \in V} c_t(v) \leq \min_{v \in V} c_t(v)+diam(G) \cdot M,	\end{equation}
	where~\begin{math}diam(G)\end{math} is the diameter of~\begin{math}G\end{math} and~\begin{math}M\end{math} is a constant such that~\begin{math}|c_t(u)-c_t(v)| \leq M\end{math} for all neighbours~\begin{math}u,v\end{math} and~\begin{math}t \geq 0\end{math}.  It follows that 
	\begin{equation}
	c_t(v)| \leq \max\left\lbrace \left|diam(G)\cdot M+\frac{\cc}{n}\right|,\left|\frac{\cc}{n}-diam(G)\cdot M\right|\right\rbrace
	\end{equation}
	 for all~\begin{math}t \geq 0\end{math} and for all~\begin{math}v\end{math}.  The result now follows from Lemma~\ref{lem:boundedvalue}.

\end{proof}

We next prove a useful implication of having an unbounded vertex.

\begin{lemma}\label{lem:1AdjUnbounded}  Let~\begin{math}G\end{math} be a graph with an initial chip configuration.  If~\begin{math}v\end{math} is unbounded, then~\begin{math}\mbox{there exists }~u \in N(v)\end{math} that is also unbounded. \end{lemma}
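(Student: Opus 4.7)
The plan is to prove the contrapositive in each direction separately: if $v$ is unbounded above then some $u \in N(v)$ is unbounded above, and symmetrically if $v$ is unbounded below. Since ``$v$ unbounded'' means $v$ is unbounded above or unbounded below, either direction produces an unbounded neighbor of $v$, which is what the lemma requires.

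For the first direction, I would suppose $v$ is unbounded above and argue by contradiction: assume every $u \in N(v)$ is bounded above by some constant $M_u$, and set $B := \max_{u \in N(v)} M_u$, so that $c_t(u) \leq B$ for all $u \in N(v)$ and all $t \geq 0$. Two facts drop straight out of Equation (\ref{ez}). First, the per-step change satisfies $|c_{t+1}(v) - c_t(v)| \leq \deg(v)$, since $v$ sends to at most $\deg(v)$ neighbours and receives from at most $\deg(v)$ neighbours. Second, whenever $c_t(v) > c_t(u)$ for every $u \in N(v)$, the vertex $v$ sends a chip to each neighbour and receives none, so $c_{t+1}(v) = c_t(v) - \deg(v)$.

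The heart of the argument is the claim $c_t(v) \leq \max\{c_0(v),\, B + \deg(v)\}$ for all $t \geq 0$, proved by induction on $t$. The base case $t=0$ is immediate. For the inductive step there is a dichotomy: if $c_t(v) > B$, then $c_t(v)$ strictly exceeds every $c_t(u)$ for $u \in N(v)$, so by the second fact $c_{t+1}(v) = c_t(v) - \deg(v) < c_t(v)$, preserving the bound; if instead $c_t(v) \leq B$, then the first fact gives $c_{t+1}(v) \leq c_t(v) + \deg(v) \leq B + \deg(v)$, again within the claimed bound. This contradicts $v$ being unbounded above. The case where $v$ is unbounded below is completely symmetric: replace ``above'' by ``below'', replace $M_u$ by lower bounds $m_u$, and replace the second fact by its mirror ($c_t(v) < c_t(u)$ for every neighbour forces $c_{t+1}(v) = c_t(v) + \deg(v)$), to conclude $c_t(v) \geq \min\{c_0(v),\, b - \deg(v)\}$ where $b := \min_{u \in N(v)} m_u$.

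The only subtle point, and hence the main obstacle, is making the dichotomy in the inductive step airtight, namely ensuring $v$ cannot ``escape'' the upper bound while crossing through the threshold $B$; this is handled precisely because the per-step increase is capped by $\deg(v)$, which matches the slack $B + \deg(v) - B$ allowed in the inductive hypothesis. Everything else is bookkeeping.
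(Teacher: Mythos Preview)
Your proof is correct and follows essentially the same approach as the paper: assume all neighbours are bounded above by some common constant, then use the dichotomy (either $c_t(v)$ exceeds that constant and must drop by $\deg(v)$, or it is below and can rise by at most $\deg(v)$) to bound $c_t(v)$ uniformly. The paper's write-up is slightly terser (it only does the ``unbounded above'' case and leaves the symmetric case implicit), whereas you spell out the induction and both directions explicitly; in fact you establish the marginally stronger statement that an unbounded-above vertex has an unbounded-\emph{above} neighbour, but the core argument is identical.
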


\begin{proof}   Let~\begin{math}G\end{math} be a graph with an initial chip configuration.  Suppose~\begin{math}v \in V(G)\end{math} is unbounded above and, for a contradiction, assume all vertices in~\begin{math}N(v)=\{v_1,v_2, \dots, v_{\deg(v)}\}\end{math} are bounded.  Then for every~\begin{math}i \in \{1,2,\dots, \deg(v)\}\end{math}, there exists a constant~\begin{math}\cc_i\end{math} such that~\begin{math}v_i\end{math} is bounded above by~\begin{math}\cc_i\end{math}.  

Let~\begin{math}\cc_m=\max\{\cc_1,\cc_2,\dots, \cc_{\deg(v)}\}\end{math}. If~\begin{math}c_t(v)>\cc_m\end{math}  for some~\begin{math}t \geq 0\end{math}, then~\begin{displaymath}c_{t+1}(v)=c_t(v)-\deg(v) < c_t(v).\end{displaymath}  If~\begin{math}c_t(v)\leq \cc_m\end{math} for some~\begin{math}t \geq 0\end{math}, then~\begin{displaymath}c_{t+1}(v)\leq \cc_m-1+\deg(v).\end{displaymath} Thus, the number of chips at~\begin{math}v\end{math} cannot exceed~\begin{math}\max\{c_1(v),\cc_m-1+\deg(v)\}\end{math}.  Therefore if~\begin{math}v\end{math} is unbounded,~\begin{math}v\end{math} must have an unbounded neighbour.\end{proof}


\subsection{Cycles, Paths, and Wheels}


In this section, we first show that for any finite initial chip configuration on~\begin{math}C_n\end{math} or~\begin{math}P_n\end{math} with~\begin{math}n\geq 3\end{math}, every vertex is bounded, and hence the chip configurations will eventually be periodic (\textit{i.e.,} will be periodic after some pre-period of steps). 

\begin{lemma}\label{lem:deg2diff}  
	Let~\begin{math}G\end{math} be a graph with an initial chip configuration and~\begin{math}u,v\end{math} be adjacent vertices in~\begin{math}G\end{math} with~\begin{math}\deg(u)=2\end{math},~\begin{math}\deg(v) \in \{1,2\}\end{math}.  For any integer~\begin{math}t \geq 0\end{math}, \begin{displaymath}|c_{t+1}(u)-c_{t+1}(v)| \leq \max\Big\{ 3, |c_t(u) -c_t(v)|\Big\}.\end{displaymath}
  \end{lemma}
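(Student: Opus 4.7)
The plan is a short case analysis on the quantity $d := c_t(u) - c_t(v)$. Let $w$ be the neighbor of $u$ other than $v$, and if $\deg(v) = 2$ let $x$ be the neighbor of $v$ other than $u$. Defining the sign function $\sigma \in \{-1, 0, +1\}$ in the usual way, the firing rule gives
$$c_{t+1}(u) - c_{t+1}(v) = d + \bigl[\sigma(c_t(v) - c_t(u)) - \sigma(c_t(u) - c_t(v))\bigr] + \Delta,$$
where $\Delta := \sigma(c_t(w) - c_t(u)) - \sigma(c_t(x) - c_t(v))$ when $\deg(v) = 2$, or $\Delta := \sigma(c_t(w) - c_t(u))$ when $\deg(v) = 1$. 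In either case $\Delta \in \{-2, -1, 0, 1, 2\}$, and the bracketed term equals $-2\sigma(d)$.

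I would then split into cases on the sign of $d$. If $d = 0$, the bracketed term vanishes and $|c_{t+1}(u) - c_{t+1}(v)| = |\Delta| \leq 2 \leq 3$. If $d \geq 1$, then $c_{t+1}(u) - c_{t+1}(v) = d - 2 + \Delta \in [d-4, d]$; for $d \geq 4$ both endpoints are non-negative, so $|c_{t+1}(u) - c_{t+1}(v)| \leq d = |c_t(u) - c_t(v)|$, while for $d \in \{1, 2, 3\}$ an inspection of the endpoints of $[d-4, d]$ gives $|c_{t+1}(u) - c_{t+1}(v)| \leq 3$. The case $d \leq -1$ follows by the symmetric argument after exchanging the roles of $u$ and $v$.

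The argument is mostly bookkeeping and presents no substantial obstacle. The one point to be careful about is handling ties correctly: the firing rule transfers a chip only under strict inequality, which is exactly what makes the $\sigma$ formulation faithful without double-counting the edge $uv$. Intuitively, the lemma says that the gap between two adjacent low-degree vertices either shrinks (when the gap is large, since the single chip flowing from $u$ to $v$ dominates the at-most-$\pm 1$ contributions from $w$ and $x$) or remains small (bounded by $3$), and this is precisely the estimate that will later feed into Lemma~\ref{lem:boundededge} to produce periodicity for paths, cycles, and related graphs.
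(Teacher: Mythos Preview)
Your argument is correct and follows essentially the same route as the paper: both proofs reduce to a short case analysis on the gap $d=c_t(u)-c_t(v)$, using that the edge $uv$ contributes $-2\sigma(d)$ while the at most two remaining neighbours $w,x$ together can shift the difference by at most $2$. Your sign-function bookkeeping is a bit more compact than the paper's explicit enumeration of the possible values of $c_{t+1}(u)$ and $c_{t+1}(v)$, but the substance is identical.
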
 

\begin{proof}
	Let~\begin{math}u,v\end{math} be adjacent vertices in graph~\begin{math}G\end{math} where~\begin{math}\deg(u)=2\end{math} and~\begin{math}\deg(v) \in \{1,2\}\end{math}.  If~\begin{math}|c_t(u)~-~c_t(v)|~<~2\end{math}, then it is easy to see that~\begin{math}|~c_{t+1}~(u)~-~c_{t+1}~(v)~|~\leq~3\end{math}.
	Suppose~\begin{math}c_t(u)-c_t(v)\geq 2\end{math}.  As~\begin{math}c_t(u) > c_t(v)\end{math}, vertex~\begin{math}u\end{math} will send a chip to~\begin{math}v\end{math} and since~\begin{math}\deg(u)=2\end{math}, vertex~\begin{math}u\end{math} sends or receives at most one additional chip.  Since~\begin{math}\deg(v)\in \{1,2\}\end{math}, vertex~\begin{math}v\end{math} sends or receives at most one additional chip.  Then~\begin{math}c_{t+1}(u)\end{math} will equal one of:~\begin{math}c_t(u)\end{math},~\begin{math}c_t(u)-1\end{math},~\begin{math}c_t(u)-2\end{math}; and~\begin{math}c_{t+1}(v)\end{math} will equal one of~\begin{math}c_t(v)\end{math},~\begin{math}c_t(v)+1\end{math},~\begin{math}c_t(v)+2\end{math}.  By considering the possible combinations, the result follows.  If~\begin{math}c_t(v)-c_t(u) \geq 2\end{math}, a similar argument yields the result.\end{proof}

\begin{theorem}\label{thm:PnCnPeriodic} For any finite initial chip configuration on~\begin{math}P_n\end{math} or~\begin{math}C_n\end{math} with~\begin{math}n\geq 3\end{math}, the chip configurations will eventually exhibit periodic behaviour. \end{theorem}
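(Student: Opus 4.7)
The plan is to show that on $P_n$ or $C_n$ with $n \geq 3$ every pair of adjacent vertices has bounded chip difference for all time, and then to invoke Lemma~\ref{lem:boundededge}. The key structural observation is that every vertex of such a graph has degree $1$ or $2$, and in particular every edge $uv$ has at least one endpoint of degree exactly $2$ (this is immediate for $C_n$, and holds for $P_n$ with $n \geq 3$ because no edge of such a path joins two degree-one vertices). Thus, after possibly relabelling, every edge satisfies $\deg(u) = 2$ and $\deg(v) \in \{1,2\}$, which is exactly the hypothesis of Lemma~\ref{lem:deg2diff}.

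The main step will then be a straightforward induction on $t$ showing that, for every edge $uv$ and every $t \geq 0$,
\[ |c_t(u) - c_t(v)| \leq \max\bigl\{3,\; |c_0(u) - c_0(v)|\bigr\}. \]
The base case $t=0$ is trivial. For the inductive step, Lemma~\ref{lem:deg2diff} gives $|c_{t+1}(u) - c_{t+1}(v)| \leq \max\{3, |c_t(u) - c_t(v)|\}$, which by the inductive hypothesis is at most $\max\{3, |c_0(u) - c_0(v)|\}$.

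Finally, I would set $M := \max_{uv \in E(G)} \max\{3, |c_0(u) - c_0(v)|\}$, which is finite since $G$ is finite and $c_0$ assigns finite integer values, so that $|c_t(u) - c_t(v)| \leq M$ for every edge $uv$ and every $t \geq 0$. Lemma~\ref{lem:boundededge} then immediately gives the desired eventual periodicity.

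The substance of the argument already lives inside Lemmas~\ref{lem:deg2diff} and~\ref{lem:boundededge}; the only thing this theorem really requires is verifying that the degree hypothesis of Lemma~\ref{lem:deg2diff} is met at every edge of $P_n$ and $C_n$, which is immediate. Consequently I do not expect a genuine obstacle here, and the write-up will amount to little more than the induction above followed by one invocation of Lemma~\ref{lem:boundededge}.
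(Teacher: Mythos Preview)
Your proposal is correct and follows essentially the same approach as the paper: iterate Lemma~\ref{lem:deg2diff} to bound $|c_t(u)-c_t(v)|$ by $\max\{3,|c_0(u)-c_0(v)|\}$ for every edge, then invoke Lemma~\ref{lem:boundededge}. The only cosmetic difference is that the paper bounds the edge differences by the cruder global quantity $\max\{3,\,M_0+N_0\}$ (with $M_0=|\max_v c_0(v)|$, $N_0=|\min_v c_0(v)|$) rather than your edge-wise maximum, and it leaves implicit the check that every edge of $P_n$ or $C_n$ satisfies the degree hypothesis of Lemma~\ref{lem:deg2diff}.
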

	
\begin{proof} Let~\begin{math}M_0 = |\max_{v \in V(G)} \{c_0(v)\}|\end{math} and~\begin{math}N_0 = |\min_{v \in V(G)} \{c_0(v)\}|\end{math}.  By Lemma \ref{lem:deg2diff}, for any adjacent vertices~\begin{math}u\end{math} and~\begin{math}v\end{math}, \begin{equation}|c_{t+1}(u)-c_{t+1}(v)| \leq \max\Big\{ 3, |c_t(u) -c_t(v)|\Big\} \leq \max\Big\{ 3, |c_0(u) -c_0(v)|\Big\} \leq \max\Big\{ 3,N_0+M_0\Big\}\end{equation} for all integers~\begin{math}t \geq 0\end{math}.  The result now follows from Lemma \ref{lem:boundededge}.\end{proof}

We next show that for any finite initial chip configuration on a wheel graph, the configurations will eventually be periodic.  Let~\begin{math}W_n\end{math} denote the wheel graph on~\begin{math}n\end{math} vertices.  

\begin{lemma}\label{lem:wheelDiff1} For any finite initial chip configuration on~\begin{math}W_n\end{math} with~\begin{math}n \geq 4\end{math}, let~\begin{math}u,v\end{math} be adjacent vertices of degree~\begin{math}3\end{math}.  For any integer~\begin{math}t \geq 0\end{math}, 

(a) if~\begin{math}|c_t(u) - c_t(v)| < 3\end{math}, then~\begin{math}|c_{t+1}(u)-c_{t+1}(v)| \leq 6\end{math} and 

(b) if~\begin{math}|c_t(u) - c_t(v)| \geq 3\end{math}, then~\begin{math}|c_{t+1}(u) - c_{t+1}(v)| \leq |c_t(u) - c_t(v)|\end{math}.  \end{lemma}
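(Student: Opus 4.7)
The plan is to express $c_{t+1}(u) - c_{t+1}(v)$ as an explicit sum of contributions from the edges incident to $u$ and $v$, and then estimate those contributions separately. Let $h$ be the hub of $W_n$ and let $u', v'$ be the outer--cycle neighbours of $u$ and $v$ distinct from each other, so that $N(u) = \{v, h, u'\}$ and $N(v) = \{u, h, v'\}$. For a neighbour $w$ of a vertex $x$, let $\sigma(w,x) \in \{-1,0,1\}$ denote the sign of $c_t(w) - c_t(x)$. Equation~(\ref{ez}) rearranges to
\[
c_{t+1}(u) - c_{t+1}(v) \;=\; d + 2\sigma(v,u) + \bigl(\sigma(h,u) - \sigma(h,v)\bigr) + \bigl(\sigma(u',u) - \sigma(v',v)\bigr),
\]
where $d = c_t(u) - c_t(v)$ and the $2\sigma(v,u)$ term collects the two opposite contributions of the edge $uv$ (since $\sigma(u,v) = -\sigma(v,u)$).

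For part (a), I would simply check the five possibilities $d \in \{-2,-1,0,1,2\}$: in each case the forced value of $\sigma(v,u)$ gives $|d + 2\sigma(v,u)| \leq 1$. Combined with the trivial bounds $|\sigma(h,u) - \sigma(h,v)| \leq 2$ and $|\sigma(u',u) - \sigma(v',v)| \leq 2$, this yields $|c_{t+1}(u) - c_{t+1}(v)| \leq 1+2+2 = 5 \leq 6$.

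For part (b), assume by symmetry that $d \geq 3$; then $\sigma(v,u) = -1$, so $d + 2\sigma(v,u) = d-2$. The step I expect to be the main obstacle is handling the hub term: the naive estimate $|\sigma(h,u) - \sigma(h,v)| \leq 2$ would only give $c_{t+1}(u) - c_{t+1}(v) \leq d+2$, which is insufficient. The crucial observation is that $h$ is a common neighbour of $u$ and $v$: since $c_t(u) > c_t(v)$, the implication $c_t(h) > c_t(u) \Rightarrow c_t(h) > c_t(v)$ (together with its analogue for $<$) forces $\sigma(h,u) \leq \sigma(h,v)$, and a case analysis on the position of $c_t(h)$ relative to $c_t(u), c_t(v)$ refines this to $\sigma(h,u) - \sigma(h,v) \in \{-2,-1,0\}$. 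Combined with $\sigma(u',u) - \sigma(v',v) \in [-2,2]$, this gives $c_{t+1}(u) - c_{t+1}(v) \in [d-6, d]$. Finally, the hypothesis $d \geq 3$ yields $d - 6 \geq -d$, so $|c_{t+1}(u) - c_{t+1}(v)| \leq d = |c_t(u) - c_t(v)|$, as required.
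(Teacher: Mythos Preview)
Your proof is correct and rests on the same key observation as the paper's: because the hub $h$ is a common neighbour of $u$ and $v$, its contribution cannot increase $c_{t+1}(u)-c_{t+1}(v)$ when $c_t(u)>c_t(v)$. The paper packages this differently: it writes $c_{t+1}(u)=c_t(u)+x$ and $c_{t+1}(v)=c_t(v)+y$ with $x\in\{-3,\dots,1\}$, $y\in\{-1,\dots,3\}$, then splits into the cases where the sign of the difference flips or not, handling the non-flip case by showing $x\le y$ via a short contradiction argument that invokes the hub. Your explicit $\sigma$-decomposition is a little cleaner: isolating the hub term as $\sigma(h,u)-\sigma(h,v)\in\{-2,-1,0\}$ gives the upper bound $d$ immediately, and the single inequality $d-6\ge -d$ replaces the paper's two-case analysis. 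You also obtain $5$ rather than $6$ in part~(a). One cosmetic remark: for $n=4$ the wheel is $K_4$ and the outer neighbours $u',v'$ coincide; your formula and bounds remain valid verbatim in that case, so this is only a matter of phrasing.
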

	
\begin{proof} 
	Let~\begin{math}u,v\end{math} be adjacent vertices of degree~\begin{math}3\end{math}  and let~\begin{math}w\end{math} be the universal vertex in~\begin{math}W_n\end{math}.
	We claim the difference between~\begin{math}|c_t(u)-c_t(v)|\end{math} and~\begin{math}|c_{t+1}(u)-c_{t+1}(v)|\end{math} is at most~\begin{math}4\end{math}.
	Without loss of generality, assume~\begin{math}c_t(u) > c_t(v)\end{math}. Let~\begin{math}c_t(v) = c_t(u) + \ell\end{math}.
	Since~\begin{math}u\end{math} necessarily sends one chip to~\begin{math}v\end{math} in this round, we have~\begin{math}c_{t+1}(u) \leq c_t(u) + 2\end{math} and~\begin{math}c_{t+1}(v) \geq c_t(v)  -2\end{math}. Therefore~\begin{math}c_{t+1}(u) - c_{t+1}(v) \leq \ell + 4\end{math}. From this it follows that the difference between~\begin{math}|c_t(u)-c_t(v)|\end{math} and~\begin{math}|c_{t+1}(u)-c_{t+1}(v)|\end{math} is at most~\begin{math}4\end{math}.
	Thus, (a) holds.

For some~\begin{math}t \geq 0\end{math}, suppose without loss of generality that~\begin{math}c_t(u) - c_t(v) \geq 3\end{math}.  During the~\begin{math}t+1^{th}\end{math} firing, a chip is sent from~\begin{math}u\end{math} to~\begin{math}v\end{math} and we have
\begin{eqnarray*} c_{t+1}(u) \in& \Big\{c_t(u)-3, c_t(u)-2, c_t(u)-1, c_t(u), c_t(u)+1\Big\}, \mbox{and} \\ c_{t+1}(v) \in& \Big\{c_t(v)-1, c_t(v), c_t(v)+1, c_t(v)+2, c_t(v)+3\Big\}.\end{eqnarray*} 

 Then~\begin{math}c_{t+1}(u) = c_t(u)+x\end{math} and~\begin{math}c_{t+1}(v)=c_t(v)+y\end{math},  where~\begin{math}x \in \{-3,-2,-1,0,1\}\end{math} and~\begin{math}y \in \{-1,0,1,2,3\}\end{math}.  
 We observe that~\begin{math}y-x \leq 6\end{math}.  
 If~\begin{math}|c_{t+1}(u)-c_{t+1}(v)| = c_{t+1}(v) - c_{t+1}(u)\end{math}, we have 
 \begin{eqnarray*} y-x \leq 6 \leq  2\Big( c_t(u)-c_t(v)\Big)~&\Longleftrightarrow&~ \Big(c_t(v)+y\Big) -\Big(c_t(u)+x\Big) \leq c_t(u) - c_t(v) \\ &\Longleftrightarrow&~ \Big|c_{t+1}(u)-c_{t+1}(v)\Big| \leq \Big|c_t(u)-c_t(v)\Big|.\end{eqnarray*}  
 Otherwise, 
 \begin{displaymath}|c_{t+1}(u)-c_{t+1}(v)| = c_{t+1}(u)-c_{t+1}(v)  = c_t(u)-c_t(v)+x-y   \leq |c_t(u)-c_t(v)|.\end{displaymath} 
 The last inequality results from the fact that~\begin{math}x-y \leq 0\end{math}.  To see this, suppose~\begin{math}x > y\end{math}.  Then~\begin{math}x \in \{0,1\}\end{math} and~\begin{math}c_t(w) \geq c_t(u) \geq c_t(v)\end{math}, which implies~\begin{math}c_{t+1}(v) \geq c_t(v)+1\end{math} and~\begin{math}y \in \{1,2,3\}\end{math}. This contradicts the assumption~\begin{math}x > y\end{math}.\end{proof}

\begin{theorem}\label{thm:WnPeriodic} For any finite initial chip distribution on~\begin{math}W_n\end{math} with~\begin{math}n\geq 4\end{math}, the chip configurations will eventually exhibit periodic behaviour. \end{theorem}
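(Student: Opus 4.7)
The plan is to invoke Lemma~\ref{lem:boundedvalue} by showing every vertex of $W_n$ is bounded. First I would use Lemma~\ref{lem:wheelDiff1} to control the gap between adjacent rim vertices, propagate this along the rim cycle, and then bound the hub separately via a conservation argument.

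Let $w$ denote the hub (of degree $n-1$) and let the remaining $n-1$ vertices form the rim cycle (each of degree $3$). Set $D_0 := \max\{|c_0(u)-c_0(v)| : u\sim v \text{ on the rim}\}$ and $K := \max\{6, D_0\}$. A routine induction on $t$ using Lemma~\ref{lem:wheelDiff1} will show that $|c_t(u)-c_t(v)| \leq K$ for every $t \geq 0$ and every pair of adjacent rim vertices: part~(b) prevents differences of size at least $3$ from growing, while part~(a) prevents any ``small'' difference from next exceeding $6$. The triangle inequality along the rim cycle then yields $|c_t(u)-c_t(v)| \leq K' := \lfloor (n-1)/2\rfloor \cdot K$ for every pair of rim vertices and every $t$.

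Next, let $W_t := c_t(w)$, and let $M_t, m_t$ denote the maximum and minimum rim values at time $t$. Since $(n-1)m_t \leq \cc - W_t$ and $M_t \leq m_t + K'$, one obtains $M_t \leq (\cc - W_t)/(n-1) + K'$. Setting $W^* := (\cc + (n-1)K')/n$, I claim $W_t > W^*$ forces $W_t > M_t$, so the hub strictly exceeds every rim neighbour, sends a chip to each, and receives none; hence $W_{t+1} = W_t - (n-1)$. Combined with the trivial estimate $|W_{t+1}-W_t| \leq n-1$, a short induction shows $W_t \leq \max\{W_0,\, W^* + n - 1\}$ for all $t$, and the symmetric argument bounds $W_t$ from below. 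Once $W_t$ is uniformly bounded, the inequalities $(\cc - W_t)/(n-1) - K' \leq m_t$ and $M_t \leq (\cc - W_t)/(n-1) + K'$ bound every rim value, so every vertex of $W_n$ is bounded and Lemma~\ref{lem:boundedvalue} yields eventual periodicity.

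The main obstacle is the second step. Unlike paths and cycles (Theorem~\ref{thm:PnCnPeriodic}), Lemma~\ref{lem:boundededge} cannot be applied directly because the hub-to-rim differences are not controlled by Lemma~\ref{lem:wheelDiff1}; the argument must instead exploit chip conservation together with the observation that whenever the hub rises above the rim equilibrium the dynamics deterministically pull it back, yielding an \emph{a priori} bound on $W_t$ valid for every $t \geq 0$ rather than merely eventually.
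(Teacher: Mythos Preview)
Your argument is correct and shares its first half with the paper's proof: both use Lemma~\ref{lem:wheelDiff1} to show that adjacent rim differences never exceed $K=\max\{6,D_0\}$, then propagate along the cycle. The treatments diverge at the hub. The paper bounds the \emph{edge difference} $|c_t(w)-c_t(v_0)|$ directly by a two-case argument (when this quantity exceeds $\sum_j R_j$ the hub is extremal relative to every rim vertex and the difference cannot grow; when it is at most $\sum_j R_j$ a single firing can increase it by at most $n+2$), and then finishes with Lemma~\ref{lem:boundededge}. You instead bound the \emph{hub value} $W_t$ itself via the global invariant $\sum_v c_t(v)=\cc$: the threshold $W^*$ is chosen so that $W_t>W^*$ forces $W_t>M_t$, whence $W_{t+1}=W_t-(n-1)$, and a short induction traps $W_t$ in a fixed interval; rim bounds then drop out of conservation and you close with Lemma~\ref{lem:boundedvalue}. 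Your route trades the paper's local edge-difference case analysis for an appeal to chip conservation, and lands on Lemma~\ref{lem:boundedvalue} rather than Lemma~\ref{lem:boundededge}; the two arguments are of comparable length and neither is strictly more elementary.
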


\begin{proof} We  show  the number of chips on neighbouring vertices is bounded at each step.  Label the vertices of~\begin{math}W_n\end{math} as~\begin{math}w\end{math},~\begin{math}v_0\end{math},~\begin{math}v_1\end{math}, \dots,~\begin{math}v_{n-2}\end{math} where~\begin{math}w \sim v_i\end{math} for~\begin{math}0 \leq i \leq n-2\end{math} and~\begin{math}v_i \sim v_{i+1 \mod n-1}.\end{math}  

Assume~\begin{math}W_n\end{math} contains an unbounded vertex and observe that by Lemma~\ref{lem:1AdjUnbounded}, there must be an unbounded vertex of degree~\begin{math}3\end{math}: let~\begin{math}v_0\end{math} be an unbounded vertex of degree~\begin{math}3\end{math}.  Further, let \begin{equation}R_i = \begin{cases} \max \Big\{ |c_0(v_i)-c_0(v_{i+1})|, 6 \Big\} & \text{ for } 0 \leq i \leq n-3 \\  \max \Big\{ |c_0(v_{n-2})-c_0(v_0)|, 6 \Big\} & \text{ for } i=n-2\\ \end{cases}.\end{equation}
 By Lemma~\ref{lem:wheelDiff1}, at any step~\begin{math}t\end{math}, the difference between the number of chips at~\begin{math}v_i\end{math} and~\begin{math}v_{i+1}\end{math} cannot exceed~\begin{math}R_i\end{math}.  Certainly, for any~\begin{math}M \in \mathbb{Z}^+\end{math},  if~\begin{math}v_0\end{math} has~\begin{math}M\end{math} chips, then the number of chips at each other vertex of degree~\begin{math}3\end{math} is at least \begin{equation*}\label{eq:R}M-\sum_{j=0}^{n-2}R_j.\end{equation*} We next show that the difference between the number of chips on~\begin{math}w\end{math} and~\begin{math}v_0\end{math} can never exceed 
 
\begin{equation}\label{pp}R_w=\max\left\{ n+2+ \sum_{j=0}^{n-2}R_j, \max_{v_i\in V(W_n)}\left\{\left|c_0(w)-c_0(v_i)\right|\right\}\right\}.\end{equation} 

For some step~\begin{math}t \geq 0\end{math}, suppose that~\begin{math}|c_t(w)-c_t(v_0)|> \sum_{j=0}^{n-2}R_j\end{math}. Since by~(\ref{pp}),~\begin{math}|c_t(v_0)-c_t(v_i)| \leq \sum_{j=0}^{n-2} R_j\end{math} for~\begin{math}i \in \{1,2,\dots,n-2\}\end{math}, this implies~\begin{math}w\end{math} has fewer (or greater) chips than all~\begin{math}v_i\end{math} for~\begin{math}0 \leq i \leq n-2\end{math}.  During the next firing,~\begin{math}w\end{math}  receives (or sends)~\begin{math}n-1\end{math} chips;  the difference between the number of chips on~\begin{math}v_i\end{math} and~\begin{math}w\end{math} has not increased, so~\begin{math}|c_{t+1}(w)-c_{t+1}(v_0)|\leq |c_t(w)-c_t(v_0)| \leq R_w\end{math} and~(\ref{pp}) holds.
	
Next, for some step~\begin{math}t \geq 0\end{math}, suppose that~\begin{math}|c_t(w)-c_t(v_0)|\leq \sum_{j=0}^{n-2}R_j\end{math}.  In the worst case,~\begin{math}w\end{math} receives (or sends)~\begin{math}n-1\end{math} chips and~\begin{math}v_0\end{math} sends (or receives) one chip to (or from) each of its~\begin{math}3\end{math} neighbours.  Thus, \begin{equation}|c_{t+1}(w)-c_{t+1}(v_0)| \leq (n-1)+3+\sum_{j=0}^{n-2}R_j \leq R_w\end{equation}  and~(\ref{pp}) holds. Therefore, for all~\begin{math}t \geq 0\end{math},~\begin{math}|c_t(w)-c_t(v_0)|\leq R_w.\end{math}  Therefore by Lemma~\ref{lem:boundededge}, the chip configurations will eventually exhibit periodic behaviour.  By Lemma~\ref{lem:boundedvalue}, we have a contradiction.\end{proof}


\subsection{Complete and Complete Bipartite Graphs}


For a vertex~\begin{math}v\end{math} in graph~\begin{math}G\end{math}, let~\begin{math}N[v]\end{math} denote the closed neighbourhood of vertex~\begin{math}v\end{math}; that is,~\begin{math}N[v] = N(v) \cup \{v\}\end{math}.

\begin{lemma}\label{lem:neighbourhood} Let~\begin{math}G\end{math} be a graph with an initial configuration.  For any~\begin{math}u,v \in V(G)\end{math}, if~\begin{math}N(u)=N(v)\end{math} or~\begin{math}N[u]=N[v]\end{math}, then for every integer~\begin{math}t \geq 0\end{math}, \begin{displaymath}|c_t(u)-c_t(v)|\leq\max\Big\{|c_0(u)-c_0(v)|,2\deg(u)\Big\}.\end{displaymath}\end{lemma}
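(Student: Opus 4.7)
The plan is to proceed by induction on $t$, with the base case trivial since $|c_0(u)-c_0(v)|$ appears inside the max. For the inductive step, I would assume without loss of generality that $c_t(u) \geq c_t(v)$, set $D = c_t(u) - c_t(v) \geq 0$, and analyze how one application of the dynamics~(\ref{ez}) changes this gap. Note that in either hypothesis ($N(u)=N(v)$ or $N[u]=N[v]$) we have $\deg(u) = \deg(v) = d$; in the first case $u \not\sim v$ and the two vertices have the common neighbourhood $N(u)=N(v)$, and in the second case $u \sim v$ and $N(u)\setminus\{v\} = N(v)\setminus\{u\}$.

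The key observation is that every common neighbour $w$ contributes a value in $\{-2,-1,0\}$ to $c_{t+1}(u)-c_{t+1}(v)$. This is a short case analysis on where $c_t(w)$ sits relative to $c_t(v) \leq c_t(u)$: if $c_t(w)$ lies strictly between them, $u$ sends to $w$ and $v$ receives from $w$, giving $-2$; if $c_t(w)$ equals one of them, exactly one of the two edges is active, giving $-1$; otherwise both edges act in the same direction (or not at all), giving $0$. Consequently, common neighbours can only shrink or preserve the signed gap $c_t(u) - c_t(v)$, and each can reduce it by at most $2$.

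When $N(u)=N(v)$, summing over all $d$ common neighbours gives $c_{t+1}(u) - c_{t+1}(v) \in [D - 2d, D]$. When $N[u]=N[v]$, the $d-1$ common neighbours contribute a drop of at most $2(d-1)$, and the edge $uv$ itself contributes exactly $-2$ (when $D>0$) or $0$ (when $D=0$), again giving $c_{t+1}(u)-c_{t+1}(v) \in [D-2d, D]$. In both cases
\[
|c_{t+1}(u) - c_{t+1}(v)| \leq \max\{D,\; 2d - D\} \leq \max\{D,\; 2d\},
\]
and invoking the inductive hypothesis $D \leq \max\{|c_0(u)-c_0(v)|, 2d\}$ closes the induction.

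The only subtle point, and the one I would be most careful about, is the bookkeeping when $c_t(w)$ ties with $c_t(u)$ or $c_t(v)$, and when $D = 0$; in the latter case the signed difference can become negative in one step, but its magnitude is still bounded by $2d$, which is why the $2\deg(u)$ term in the max is needed. Everything else is a straightforward enumeration of the at most five relative orderings of $c_t(w)$, $c_t(u)$, $c_t(v)$.
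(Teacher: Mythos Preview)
Your proof is correct and follows essentially the same approach as the paper: both establish the one-step bound $c_{t+1}(u)-c_{t+1}(v)\in[D-2d,\,D]$ by classifying common neighbours according to where $c_t(w)$ falls relative to $c_t(v)\le c_t(u)$ (the paper packages this via the counts $x_1,x_2,y_1,y_2,y_3,z$, while you phrase it as per-neighbour contributions in $\{-2,-1,0\}$), and then conclude by induction. One small slip: when $D=0$ every common neighbour actually contributes $0$ and the edge $uv$ (if present) is inactive, so the difference stays exactly $0$ rather than possibly becoming negative---but this only strengthens your bound.
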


\begin{proof}  
	
	Let~\begin{math}G\end{math} be a graph with an initial configuration and let~\begin{math}u,v \in V(G)\end{math} with~\begin{math}N(u)=N(v)\end{math} or~\begin{math}N[u]=N[v]\end{math}; then~\begin{math}\deg(u)=\deg(v)\end{math}.  Since we do not know whether~\begin{math}u\end{math} and~\begin{math}v\end{math} are adjacent, we  consider~\begin{math}N(u)\setminus\{v\}\end{math}.  Next, without loss of generality, suppose~\begin{math}c_t(u) \geq c_t(v)\end{math} and let \vspace{0.1in}

\begin{itemize}
	\item~\begin{math}x_1 = |\{w \in N(u)\backslash \{v\}~:~ c_t(w) > c_t(u)\}|\end{math}
	
\item~\begin{math}x_2 = |\{w \in N(u)\backslash \{v\}~:~ c_t(w) < c_t(v)\}|\end{math}
	
\item~\begin{math}y_1 = |\{w \in N(u)\backslash \{v\}~:~ c_t(v) =c_t(w)< c_t(u)\}|\end{math}
	
\item~\begin{math}y_2 = |\{w \in N(u)\backslash \{v\}~:~ c_t(v) <c_t(w)=c_t(u)\}|\end{math}
	
\item~\begin{math}y_3 = |\{w \in N(u)\backslash \{v\}~:~ c_t(v) <c_t(w)< c_t(u)\}|\end{math}

\item~\begin{math}\begin{displaystyle}
z=
\begin{cases} 
1 & \text{ if } u\sim v \text{ and } c_t(u)>c_t(v);\\
0 & \text{ otherwise.}\\ 
\end{cases}\end{displaystyle}\end{math}
\vspace{0.1in}

\end{itemize}
\noindent 
Then,
\begin{eqnarray*}c_{t+1}(u)-c_{t+1}(v) &=& \Big(c_t(u)+x_1-x_2-y_1-y_3 - z\Big)-\Big(c_t(v)+x_1-x_2+y_2+y_3+z\Big) \\ &=& c_t(u)-c_t(v) -(y_1+y_2+2y_3+2z) \\ &\leq& \max\{c_t(u)-c_t(v), 2\deg(u)\}\end{eqnarray*} 
as~\begin{math}0 \leq y_1+y_2+2y_3+2z \leq 2\deg(u)\end{math}.

\end{proof}

Corollary~\ref{cor:equal} follows immediately from the proof of Lemma~\ref{lem:neighbourhood}, as if~\begin{math}c_t(u)=c_t(v)\end{math} then~\begin{math}y_1=y_2=y_3=0\end{math}.

\begin{corollary}\label{cor:equal} Let~\begin{math}G\end{math} be a graph with an initial configuration and let~\begin{math}u,v\in V(G)\end{math} where~\begin{math}N(u)=N(v)\end{math}.  If~\begin{math}c_t(u)=c_t(v)\end{math} for some~\begin{math}t \geq 0\end{math}, then~\begin{math}c_r(u)=c_r(v)\end{math} for all integers~\begin{math}r\geq t\end{math}.  \end{corollary}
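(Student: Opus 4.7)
The plan is to reuse the exact identity derived in the proof of Lemma~\ref{lem:neighbourhood} and observe that under the hypothesis $c_t(u)=c_t(v)$ the "penalty" term collapses to zero, yielding $c_{t+1}(u)=c_{t+1}(v)$; a trivial induction on $r$ then finishes the argument.

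More concretely, I would first note that because $N(u)=N(v)$ and $u\notin N(u)$, we have $u\not\sim v$, so the indicator $z$ defined in the proof of Lemma~\ref{lem:neighbourhood} is $0$ (and in any case $z$ requires the strict inequality $c_t(u)>c_t(v)$, which fails). Then I would revisit the three quantities
\[
y_1=\bigl|\{w\in N(u)\setminus\{v\}: c_t(v)=c_t(w)<c_t(u)\}\bigr|,\quad
y_2=\bigl|\{w\in N(u)\setminus\{v\}: c_t(v)<c_t(w)=c_t(u)\}\bigr|,
\]
\[
y_3=\bigl|\{w\in N(u)\setminus\{v\}: c_t(v)<c_t(w)<c_t(u)\}\bigr|,
\]
and point out that each defining condition requires a strict inequality between $c_t(v)$ and $c_t(u)$, which is impossible when $c_t(u)=c_t(v)$. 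Hence $y_1=y_2=y_3=0$.

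Substituting into the identity
\[
c_{t+1}(u)-c_{t+1}(v)=\bigl(c_t(u)-c_t(v)\bigr)-\bigl(y_1+y_2+2y_3+2z\bigr)
\]
from the proof of Lemma~\ref{lem:neighbourhood} gives $c_{t+1}(u)-c_{t+1}(v)=0$, i.e.\ $c_{t+1}(u)=c_{t+1}(v)$. Since the only hypothesis used at step $t$ was equality of the chip counts at $u$ and $v$, the same argument applies at step $t+1$, and an immediate induction on $r\ge t$ yields $c_r(u)=c_r(v)$ for all $r\ge t$.

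There is no real obstacle here: the work was already done in Lemma~\ref{lem:neighbourhood}, and the corollary is just the observation that the slack in the inequality vanishes precisely when $c_t(u)=c_t(v)$. The only minor point to be careful about is verifying that $z=0$ in the equal-valued case, which is immediate from the definition of $z$.
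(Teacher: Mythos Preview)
Your proposal is correct and follows essentially the same approach as the paper: the paper simply notes that the corollary is immediate from the proof of Lemma~\ref{lem:neighbourhood} because $y_1=y_2=y_3=0$ when $c_t(u)=c_t(v)$. Your version just spells out the vanishing of $z$ and the induction step explicitly, which is fine but not strictly necessary.
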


\begin{theorem}\label{thm:KnPeriodic} For any finite initial configuration of chips on~\begin{math}K_n\end{math} for~\begin{math}n \geq 2\end{math}, the configurations will eventually be periodic. \end{theorem}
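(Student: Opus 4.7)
The plan is to observe that in $K_n$ the hypothesis of Lemma~\ref{lem:neighbourhood} holds for every pair of vertices, which immediately yields a uniform bound on chip differences across every edge; Lemma~\ref{lem:boundededge} then delivers the conclusion. Specifically, for any two vertices $u,v\in V(K_n)$ we have $N[u]=V(K_n)=N[v]$, so Lemma~\ref{lem:neighbourhood} applies and gives
\begin{displaymath}
|c_t(u)-c_t(v)|\leq \max\Big\{|c_0(u)-c_0(v)|,\,2(n-1)\Big\}
\end{displaymath}
for every $t\geq 0$.

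Taking the maximum of the right-hand side over all pairs $u,v\in V(K_n)$ produces a single constant $M$ (depending only on $n$ and the initial configuration $c_0$) such that $|c_t(u)-c_t(v)|\leq M$ for every edge $uv$ of $K_n$ (indeed, for every pair, since $K_n$ is complete) and every $t\geq 0$. This is exactly the hypothesis of Lemma~\ref{lem:boundededge}, whose conclusion is that the chip configurations eventually exhibit periodic behaviour.

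There is essentially no obstacle here: all the heavy lifting was done in Lemma~\ref{lem:neighbourhood}, and the only ingredient special to $K_n$ is the trivial observation that all closed neighbourhoods coincide. The small case $n=2$ is handled by the same argument, since $K_2$ also satisfies $N[u]=N[v]$, and is not covered by Theorem~\ref{thm:PnCnPeriodic}, so it is worth noting explicitly that the proof is uniform in $n\geq 2$.
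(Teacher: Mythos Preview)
Your proof is correct and follows essentially the same route as the paper: both invoke Lemma~\ref{lem:neighbourhood} via the observation that $N[u]=N[v]$ for all $u,v\in V(K_n)$, obtain a uniform bound on $|c_t(u)-c_t(v)|$, and then apply Lemma~\ref{lem:boundededge}. The paper packages the uniform constant explicitly as $\max\{M+N,\,2(n-1)\}$ where $M=|\max_v c_0(v)|$ and $N=|\min_v c_0(v)|$, whereas you simply take the maximum over all pairs; these are equivalent.
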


\begin{proof} Let~\begin{math}M = |\max_{v \in V(K_n)} \{c_0(v)\}|\end{math} and~\begin{math}N = |\min_{v \in V(K_n)} \{c_0(v)\}|\end{math}.  By Lemma \ref{lem:neighbourhood}  for all~\begin{math}u,v \in V(K_n)\end{math} and all~\begin{math}t \geq0\end{math} \begin{equation}|c_t(u)-c_t(v)|\leq\max\Big\{|c_0(u)-c_0(v)|,2\deg(u)\Big\} \leq \max \{M+N, 2(n-1)\}.\end{equation}
The result now follows from Lemma \ref{lem:boundededge}. \end{proof}

\begin{theorem} For any finite initial configuration of chips on~\begin{math}K_{m,n}\end{math} for~\begin{math}m,n\geq 1\end{math}, the configurations will eventually be periodic.\end{theorem}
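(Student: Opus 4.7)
The plan is to control the chip differences between adjacent vertices and then invoke Lemma~\ref{lem:boundededge}. Let $A$ and $B$ denote the two parts of $K_{m,n}$ with $|A|=m$ and $|B|=n$. Since any two vertices in the same part of $K_{m,n}$ have identical open neighbourhood (each is adjacent to every vertex of the other part), Lemma~\ref{lem:neighbourhood} immediately gives constants $R_A, R_B$ such that $|c_t(u)-c_t(u')|\le R_A$ for all $u,u'\in A$ and $|c_t(v)-c_t(v')|\le R_B$ for all $v,v'\in B$, for every $t\ge 0$. So the only remaining task is to bound $|c_t(u)-c_t(v)|$ for $u\in A$ and $v\in B$, i.e.\ the inter-part (and hence adjacent) differences.

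The key observation is a ``gap-collapse'' phenomenon. Set $D_t=\max_{u\in A}c_t(u)-\min_{v\in B}c_t(v)$ and put $R=R_A+R_B$. If $D_t>R+(m+n)$, then every vertex of $A$ has strictly more chips than every vertex of $B$ at step $t$ (using the intra-part bounds to flatten each part to within $R_A$ and $R_B$ of its extremes). In this ``separated'' configuration the dynamics become completely explicit: each $u\in A$ sends one chip to every one of its $n$ neighbours in $B$ and receives none, while each $v\in B$ receives one chip from each of its $m$ neighbours in $A$ and sends none. Thus $c_{t+1}(u)=c_t(u)-n$ and $c_{t+1}(v)=c_t(v)+m$, giving $D_{t+1}=D_t-(m+n)$. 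On the other hand, whenever $D_t\le R+(m+n)$, the trivial bound $|c_{t+1}(w)-c_t(w)|\le\deg(w)\le\max\{m,n\}$ yields $D_{t+1}\le D_t+(m+n)\le R+2(m+n)$. A straightforward induction then gives $D_t\le\max\{D_0,\,R+2(m+n)\}$ for every $t\ge 0$. The symmetric quantity $\max_{v\in B}c_t(v)-\min_{u\in A}c_t(u)$ is bounded by the same argument with the roles of $A$ and $B$ swapped.

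Combining the intra-part bounds from Lemma~\ref{lem:neighbourhood} with these two inter-part bounds, there is a uniform constant $K$ (depending only on $m$, $n$, and $c_0$) with $|c_t(u)-c_t(v)|\le K$ for every edge $uv\in E(K_{m,n})$ and every $t\ge 0$. The conclusion then follows directly from Lemma~\ref{lem:boundededge}.

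The main obstacle is the gap-collapse step: one must verify carefully that when $D_t$ exceeds the right threshold the intra-part bounds actually force a clean separation between $A$ and $B$, so that the firing rule becomes the deterministic ``all of $A$ sends to all of $B$'' update. Once that is in hand, the rest of the argument is bookkeeping.
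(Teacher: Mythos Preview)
Your argument is correct. The gap-collapse step works: from $|c_t(u)-c_t(u')|\le R_A$ and $|c_t(v)-c_t(v')|\le R_B$ one gets $\min_{A}c_t-\max_{B}c_t\ge D_t-R$, so $D_t>R$ already forces strict separation (your threshold $R+(m+n)$ is more than enough). In the separated regime every $u\in A$ has only neighbours in $B$ and dominates all of them, so indeed $c_{t+1}(u)=c_t(u)-n$ and $c_{t+1}(v)=c_t(v)+m$, giving $D_{t+1}=D_t-(m+n)$; the non-separated bound and the induction are routine. Together with the symmetric bound on $\max_B c_t-\min_A c_t$ and Lemma~\ref{lem:boundededge}, this finishes the proof.

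This is a genuinely different route from the paper's. The paper argues by contradiction: assuming some $x\in X$ is unbounded above, it takes the first time $c_t(x)\ge M$, observes that $x$ must have received a chip from some $y\in Y$ with $c_{t-1}(y)>c_{t-1}(x)\ge M-\deg(x)$, and then invokes Lemma~\ref{lem:neighbourhood} to drag \emph{every} vertex up to roughly $M$ at time $t-1$; for $M$ large this forces $\sum_v c_{t-1}(v)>\mathcal C$, contradicting chip conservation. So both proofs lean on Lemma~\ref{lem:neighbourhood} to flatten each part, but they diverge on the cross-part control: the paper uses the global invariant (total chips) to rule out blow-up, while you run a direct Lyapunov-style argument on $D_t$. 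Your version is constructive and yields an explicit uniform bound on edge differences; the paper's is shorter and sidesteps the case analysis, but gives no quantitative information.
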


\begin{proof}  Let~\begin{math}X\end{math} and~\begin{math}Y\end{math} partition the vertices of~\begin{math}K_{m,n}\end{math} where~\begin{math}|X|=m\end{math},~\begin{math}|Y|=n\end{math}.  For a contradiction, assume that there is an unbounded vertex~\begin{math}x \in X\end{math}. Without loss of generality, assume that~\begin{math}x\end{math} is unbounded above. Then for any~\begin{math}M\in\mathbb{Z}^+\end{math} there exists a \emph{ least} integer~\begin{math}t\end{math} such that~\begin{math}c_t(x)\geq M\end{math}.  As the only vertices adjacent to~\begin{math}x\end{math} are in~\begin{math}Y\end{math}, at least one chip must  be sent from vertices in~\begin{math}Y\end{math} to~\begin{math}x\end{math} during the~\begin{math}t^{th}\end{math} firing:  suppose~\begin{math}x\end{math} receives a chip from vertex~\begin{math}y\in Y\end{math}.  Then as~\begin{math}c_{t-1}(x)\geq M-\deg(x)\end{math}, we find~\begin{math}c_{t-1}(y)\geq M-\deg(x)+1\end{math}.

By Lemma~\ref{lem:neighbourhood},  \begin{equation}c_{t-1}(x') \geq M - \deg(x)-\max\Big\{ |c_0(x)-c_0(x')|, 2|Y|\Big\}\end{equation}  for each~\begin{math}x' \in X\backslash \{x\}\end{math} and \begin{equation}c_{t-1}(y') \geq M-\deg(x)+1-\max\Big\{|c_0(y)-c_0(y')|,2|X|\Big\}\end{equation} for each~\begin{math}y' \in Y\backslash\{y\}.\end{math}  As~\begin{math}t\end{math} is the least integer for which~\begin{math}c_t(u)\geq M\end{math}, the minimum total number of chips on~\begin{math}K_{m,n}\end{math} after the~\begin{math}t-2^{th}\end{math} firing must be at least  \begin{eqnarray}\label{k}&\Big(M-\deg(x)\Big) + \sum_{x'\in X\backslash\{x\}}\Big(M-\deg(x)-\max\{|c_0(x)-c_0(x')|,2|Y|\}\Big)+\nonumber\\ &\Big(M-\deg(x)+1\Big)+ \sum_{y'\in Y\backslash\{y\}}\Big(M-\deg(x)+1-\max\{|c_0(y)-c_0(y')|,2|X|\}\Big).~\end{eqnarray}  

Let~\begin{math}\cc =\sum_{v \in V(G)} c_0(v)\end{math}.
By  choosing~\begin{math}M\end{math} sufficiently large so that the sum in~(\ref{k}) exceeds~\begin{math}\cc\end{math}, we arrive at a contradiction. Therefore~\begin{math}x\end{math} is bounded above.  As~\begin{math}x\end{math} was chosen arbitrarily it follows that every vertex must be bounded above.  The result now follows from Lemma \ref{lem:boundedvalue}. \end{proof}


\section{Period Length 2}\label{sec:period2}


\subsection{Preliminary Results on Period Length 2}

Let~\begin{math}G\end{math} be a graph and~\begin{math}c_i\end{math} be a configuration. For all~\begin{math}u \in V(G)\end{math} let 
\begin{align*}
\Delta_i^-(u) &= \left| \{v : v \in N(u) \mbox{ and } c_i(u) > c_i(v)\} \right|,\\
\Delta_i^+(u) &= \left| \{v : v \in N(u) \mbox{ and } c_i(u) < c_i(v)\} \right|, \mbox{ and}\\
\Delta_i(u) &= \Delta_i^+(u) - \Delta_i^-(u).
\end{align*}
Observe~\begin{math}c_{i+1}(u) = c_i(u) + \Delta_i(u)\end{math}.

We say that a graph~\begin{math}G\end{math} with configuration~\begin{math}c_i\end{math} has \emph{property plus} if 

\begin{enumerate}
	\item~\begin{math}c_i(u) + \Delta_i(u) > c_i(v) + \Delta_i(v)\end{math} for all~\begin{math}uv \in E(G)\end{math}, where~\begin{math}c_i(u) < c_i(v)\end{math}, and
	\item~\begin{math}c_i(u) + \Delta_i(u) = c_i(v) + \Delta_i(v)\end{math} for all~\begin{math}uv \in E(G)\end{math}, where~\begin{math}c_i(u) = c_i(v)\end{math}.
\end{enumerate}

If each vertex has a positive number of chips, then the first requirement of property plus may be interpreted as~\begin{math}u\end{math} sending a chip to~\begin{math}v\end{math} in round~\begin{math}i\end{math} and~\begin{math}v\end{math} returning that chip to~\begin{math}u\end{math} in the subsequent round. From property plus we find a necessary and sufficient condition for a graph and initial configuration to be tight.

\begin{theorem}\label{thm:propertyplus}
	A graph~\begin{math}G\end{math} with initial configuration~\begin{math}c_0\end{math} is tight if and only if there exists~~\begin{math}t\end{math} such that~\begin{math}c_t\end{math} has property plus.
\end{theorem}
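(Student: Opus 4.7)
The plan is to prove both implications. The forward direction should be a direct edge-by-edge calculation: assuming $c_t$ has property plus, I argue that the net chip exchange on each edge $uv$ over rounds $t+1$ and $t+2$ is zero. If $c_t(u) < c_t(v)$, then $v$ sends a chip to $u$ in round $t+1$, and property plus gives $c_{t+1}(u) > c_{t+1}(v)$, so $u$ returns a chip to $v$ in round $t+2$. If $c_t(u) = c_t(v)$, property plus forces $c_{t+1}(u) = c_{t+1}(v)$ and no chip crosses $uv$ in either round. The case $c_t(u) > c_t(v)$ is symmetric. Summing these contributions over $N(u)$ yields $c_{t+2}(u) = c_t(u)$, hence $c_{t+2} = c_t$ and $G$ with $c_0$ is tight.

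For the backward direction, tightness gives some $T$ with $c_T = c_{T+p}$ for $p \in \{1,2\}$. The case $p=1$ is straightforward: as was already observed in the introduction, $c_T = c_{T+1}$ forces $c_T$ to be constant on each connected component of $G$, and then property plus holds vacuously at $c_T$. The substantive case is $p=2$, so assume $c_T = c_{T+2}$ but $c_T \ne c_{T+1}$.

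For each edge $uv$ I define the two-round net flow
$$\Phi(u,v) \;=\; \operatorname{sgn}(c_T(v) - c_T(u)) + \operatorname{sgn}(c_{T+1}(v) - c_{T+1}(u)) \;\in\; \{-2,-1,0,1,2\}.$$
A quick case check on the nine sign combinations shows that property plus holds on edge $uv$ exactly when $\Phi(u,v)=0$. Moreover $\Phi(u,v) = -\Phi(v,u)$, and since $c_{T+2} = c_T$ forces $\Delta_{T+1}(u) = -\Delta_T(u)$, we get $\sum_{v \in N(u)} \Phi(u,v) = \Delta_T(u) + \Delta_{T+1}(u) = 0$ at every vertex $u$. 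Thus, oriented by the sign of its positive flow, $\Phi$ is a circulation supported on the set of \emph{bad} edges $\{uv : \Phi(u,v) \ne 0\}$.

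The key step, where the argument could stumble, is ruling out any non-trivial circulation. I will use the standard fact that a non-trivial circulation on a finite (multi)digraph contains a simple directed cycle; applied here it yields $w_0 \to w_1 \to \cdots \to w_{k-1} \to w_0$ with $\Phi(w_{i+1}, w_i) > 0$ on each arc. Now $\Phi(w_{i+1}, w_i) > 0$ forces both $c_T(w_i) \geq c_T(w_{i+1})$ and $c_{T+1}(w_i) \geq c_{T+1}(w_{i+1})$, with at least one of these inequalities strict (the two sign terms cannot both vanish when their sum is positive). Chaining the weak inequalities around the cycle collapses both $c_T$ and $c_{T+1}$ to be constant on the cycle vertices; this rules out any strict inequality and contradicts positive flow on every arc. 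Hence no bad edge exists and $c_T$ has property plus. The only bookkeeping point is that an edge carrying flow $\pm 2$ should be viewed as two parallel unit arcs in the multigraph, so that the cycle decomposition still produces a cycle on which every arc has strictly positive flow.
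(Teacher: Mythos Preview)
Your proof is correct. The forward direction is identical in spirit to the paper's: property plus forces $\Delta_{t}^{+}(u)=\Delta_{t+1}^{-}(u)$ and $\Delta_{t}^{-}(u)=\Delta_{t+1}^{+}(u)$, hence $\Delta_{t+1}(u)=-\Delta_t(u)$ and $c_{t+2}=c_t$; you simply phrase this edge-by-edge rather than vertex-by-vertex.

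For the converse, your argument and the paper's are the same idea wearing different clothes. The paper starts at a bad edge $u_0u_1$ and, using the balance $\Delta_T(u_1)+\Delta_{T+1}(u_1)=0$, finds a neighbour $u_2$ along which the ``bad'' sign pattern persists, then iterates to produce a walk on which $c_T$ (and $c_{T+1}$) are weakly monotone with infinitely many strict steps --- impossible in a finite graph. Your version names this balance condition as the divergence-free (circulation) property of $\Phi$ and invokes the standard fact that a nonzero integral circulation supports a directed cycle; the contradiction is then the same monotone-around-a-cycle argument. What your packaging buys is conceptual clarity (the right object is a circulation, and cycle decomposition is the right lemma), and it cleanly handles the $|\Phi|=2$ edges via parallel arcs. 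What the paper's hands-on chain construction buys is self-containment --- no auxiliary lemma is cited. Either way the key step is identical: a closed walk along which both $c_T$ and $c_{T+1}$ are weakly monotone must be constant on both, contradicting the presence of a bad edge on it.

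One very minor quibble: in the fixed case $p=1$ you say property plus holds ``vacuously''. Condition~(1) is vacuous, but condition~(2) still needs the observation that $c_T$ is constant on each component (which you do cite), so that $\Delta_T$ vanishes identically and equal neighbours stay equal.
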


\begin{proof}
	If~\begin{math}n < 3\end{math}, then the result is trivial. As such, it suffices to consider~\begin{math}n \geq 3\end{math}. Assume~\begin{math}c_t\end{math} has property plus. It suffices to show~\begin{math}c_t(u) = c_{t+2}(u)\end{math} for all~\begin{math}u \in V(G)\end{math}. Since~\begin{math}c_t\end{math} has property plus we observe~\begin{math}\Delta_t^+(u) = \Delta_{t+1}^-(u)\end{math} and~\begin{math}\Delta_t^-(u) = \Delta_{t+1}^+(u)\end{math}. Together these two facts imply~\begin{math}\Delta_t(u)  = -\Delta_{t+1}(u)\end{math}. From this we see \begin{equation}c_{t+2}(u) = c_{t+1}(u) + \Delta_{t+1}(u) = c_t(u) + \Delta_{t}(u) + \Delta_{t+1}(u) = c_t(u).\end{equation}
	
	Assume now that~\begin{math}G\end{math} with initial configuration~\begin{math}c_0\end{math} is eventually tight. Further, assume without loss of generality that~\begin{math}c_0(u) = c_2(u)\end{math} for all~\begin{math}u \in  V(G)\end{math}. Suppose that~\begin{math}c_0\end{math} does not have property plus. There are two cases to consider:
		\begin{enumerate}
			\item There exists~\begin{math}uv \in E(G)\end{math} such that~\begin{math}c_0(u) < c_0(v)\end{math}, but~\begin{math}c_1(u) \leq c_1(v)\end{math}; or
			\item there exists~\begin{math}uv \in E(G)\end{math} such that~\begin{math}c_0(u) = c_0(v)\end{math}, but~\begin{math}c_1(u) < c_1(v)\end{math}.			
		\end{enumerate}
	
	Assume there exists an edge~\begin{math}u_0u_1\end{math} where~\begin{math}c_0(u_0)<c_0(u_1)\end{math} and~\begin{math}c_1(u_0)\leq c_1(u_1)\end{math}.
	Since~\begin{math}c_0(u_1) = c_2(u_1)\end{math}, there exists some edge~\begin{math}u_1u_2\end{math} so that~\begin{math}c_0(u_1) \leq c_0(u_2)\end{math} and~\begin{math}c_1(u_1) < c_1(u_2)\end{math}.
	Repeating this argument yields a sequence of vertices~\begin{math}u_0, u_1, u_2, u_3 \dots \end{math} so that \begin{equation}c_0(u_0) < c_0(u_1) \leq c_0(u_2) < c_0(u_3) \leq \dots.\end{equation}
	However, since~\begin{math}G\end{math} is finite, eventually there exists a pair of indices~\begin{math}i,j\end{math} so that~\begin{math}i < j+1\end{math} and~\begin{math}u_i = u_j\end{math}.  This is a contradiction as it implies~\begin{math}c_0(u_i) < c_0(u_j).\end{math}

 	Assume there exists an edge~\begin{math}u_0u_1\end{math} where~\begin{math}c_0(u_0)=c_0(u_1)\end{math} and~\begin{math}c_1(u_0) < c_1(u_1)\end{math}. Proceeding as in the previous case yields the same contradiction.
\end{proof}

Property plus describes the necessary and sufficient conditions for the process to be tight. And so any graph that has property plus with respect to~\begin{math}c_0\end{math} will either enter a cycle with period length~\begin{math}2\end{math} with no pre-period or will be fixed.

A \emph{ full-degree initial chip configuration} is an initial configuration of~\begin{math}G\end{math} where each vertex in a graph has degree-many chips.

We next prove that given a full-degree initial configuration, chip configurations on paths are eventually periodic with period length~\begin{math}2\end{math}.  We illustrate the first five chip configurations in Figure~\ref{Fig:PathPeriod2}.  One can easily observe the movement of chips propagates toward centre of the path.  

\begin{figure}[htbp]
\[ \includegraphics[width=0.9\textwidth]{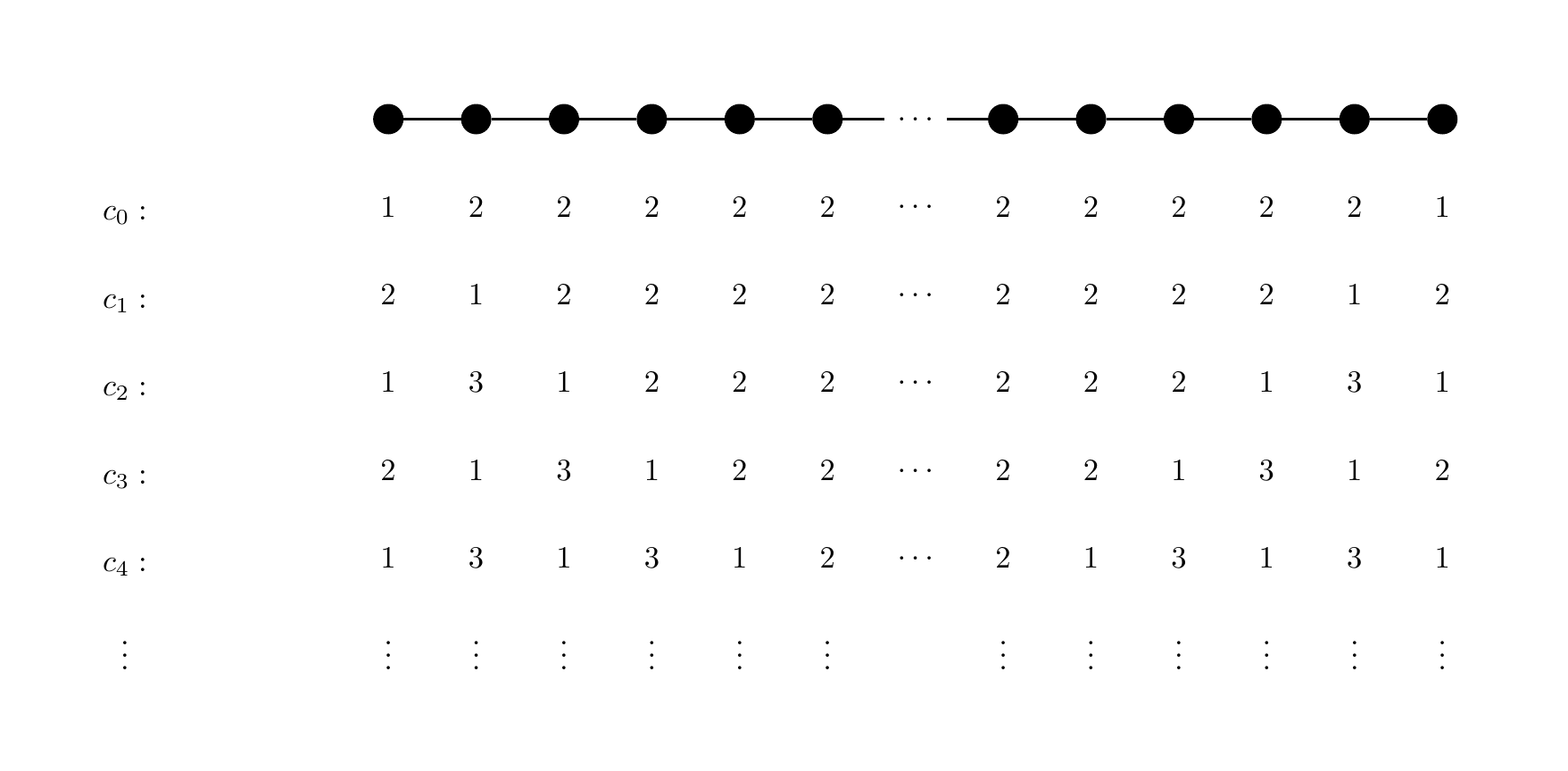} \]

\vspace{-0.25in}\caption{Configurations on a path $P_n$.}

\label{Fig:PathPeriod2}
\end{figure}

We begin by proving a lemma regarding the process on the one-way infinite path with a full-degree chip configuration. To do this we consider the sequence~\begin{math}\left\lbrace c_t(v_i)\right\rbrace_{i\geq1}\end{math} as an infinite word with entries drawn from the alphabet~\begin{math}\mathit{\{1,2,3\}}\end{math}. Recall that~\begin{math}\mathit{(ab)^k}\end{math} denotes the sequence of length~\begin{math}2k\end{math} consisting of alternating~\begin{math}\mathit{a}\end{math}'s and~\begin{math}\mathit{b}\end{math}'s.

\begin{lemma}\label{lem:infinitePath}
	Let~\begin{math}P_\infty = u_1,u_2, \dots\end{math} be the one way infinite path. For all~\begin{math}t = 2k\end{math}, we have~\begin{math}\left\lbrace c_{t}(u_i)\right\rbrace_{i\geq 1} = \mathit{(13)^k1222\dots}\end{math}. For all~\begin{math}t = 2k+1\end{math}, we have~\begin{math}\left\lbrace c_t(u_i)\right\rbrace_{i\geq 1} = \mathit{2(13)^k1222\dots}\end{math}.  
\end{lemma}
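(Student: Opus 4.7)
The plan is to induct on $t$, with the claimed formula for $\{c_t(u_i)\}_{i\geq 1}$ as the inductive hypothesis. The base case $t=0$ is immediate from the full-degree configuration: $c_0(u_1) = \deg(u_1) = 1$ and $c_0(u_i) = \deg(u_i) = 2$ for every $i \geq 2$, which is exactly $(13)^0 \cdot 1 \cdot 222\dots$, matching the claimed form for $t = 2k$ with $k = 0$.

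For the inductive step, first consider $t = 2k$ with $c_t = (13)^k 1 \cdot 222\dots$. I would apply the diffusion rule vertex by vertex. Inside the alternating block, each value-$3$ vertex has both neighbours equal to $1$, so sends two chips and becomes $1$; each value-$1$ vertex whose two neighbours lie inside the block has both neighbours equal to $3$, so receives two chips and becomes $3$. The leftmost vertex $u_1$ has the single neighbour $u_2 \in \{2,3\}$, which is larger, so it receives one chip and becomes $2$. At the right interface, $u_{2k+1}$ has value $1$ with both $u_{2k}=3$ and $u_{2k+2}=2$ strictly larger, so it receives two chips and becomes $3$; then $u_{2k+2}$ (value $2$) has $u_{2k+1}=1$ strictly smaller and $u_{2k+3}=2$ equal, so it sends one chip and becomes $1$. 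Every $u_i$ with $i\geq 2k+3$ has only equal neighbours and is unchanged. Reassembling gives $c_{t+1} = 2(13)^k 1 \cdot 222\dots$, the required form for $t+1 = 2k+1$.

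The case $t = 2k+1$ follows from the analogous computation starting from $c_t = 2(13)^k 1 \cdot 222\dots$: the leading $2$ at $u_1$ becomes $1$, the interior of the alternating block again swaps $1$'s and $3$'s, and the interface with the constant tail absorbs one more $2$ into the block, yielding $c_{t+1} = (13)^{k+1} 1 \cdot 222\dots$. The principal (though mild) obstacle is the bookkeeping at the two ends of the alternating block: one must verify that the block advances by exactly one position into the constant tail on each step, and that a leading $2$ appears on odd-parity steps and disappears on even-parity steps. The interior behaviour — a value $1$ flanked by $3$'s becomes $3$, and a value $3$ flanked by $1$'s becomes $1$ — is a routine check of the diffusion rule.
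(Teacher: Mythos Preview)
Your proof is correct and follows essentially the same approach as the paper's: both argue by induction on time and verify the one-step dynamics by a case analysis over positions (left endpoint, interior of the alternating block, right interface with the constant tail, and the unchanged tail). The only cosmetic differences are that the paper inducts on $k$ (handling $t=2k$ and $t=2k+1$ within each step) and records the computation via the $\Delta$-values, whereas you induct on $t$ and describe the updated values directly; the underlying verification is the same.
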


\begin{proof}
	We proceed by induction on~\begin{math}k\end{math}, noting that our claims are true for~\begin{math}k=0\end{math}. Assume~\begin{math}k=r>0\end{math} and consider~\begin{math}t = 2r\end{math}.
	By induction we have \begin{displaymath}\left\lbrace c_{2(r-1)+1}(u_i)\right\rbrace_{i\geq 1} = \mathit{2(13)^{r-1}1222\dots}.\end{displaymath} Observe	
	\begin{displaymath}\Delta_{2(r-1)+1}(u_i) = 
	\begin{cases}
    -2 & i \equiv 1 \bmod 2, \;  (3 \leq i \leq 2r-1)   \\
    -1 & i = 1, 2r+1 \\
    0 &  i \geq 2r+2\\ 
    2 & i \equiv 0 \bmod 2,  \; (2 \leq i \leq 2r). \\
	\end{cases}\end{displaymath}
	The result now follows by application of the process dynamics. A similar argument gives the case~\begin{math}t = 2r + 1\end{math}.
\end{proof}

\begin{theorem} The path~\begin{math}P_n\end{math} with full-degree initial configuration is periodic with period length~\begin{math}2\end{math} with a pre-period length of~\begin{math}rad(G)-1\end{math}.
\end{theorem}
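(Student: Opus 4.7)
The plan is to combine Lemma~\ref{lem:infinitePath} with the mirror symmetry of $P_n$ to explicitly describe the configurations $c_t$ as two ``disturbances'' propagating inward from the endpoints, and then apply Theorem~\ref{thm:propertyplus} once these disturbances collide at the centre.

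First, I would observe that the full-degree configuration on $P_n = v_1 v_2 \cdots v_n$ is $c_0 = (1, 2, 2, \ldots, 2, 1)$, which is mirror-symmetric. Because the dynamics are invariant under graph automorphisms, the symmetry $c_t(v_i) = c_t(v_{n+1-i})$ is preserved for all $t \geq 0$. Next, I would prove by induction on $t$ (using the one-way infinite path result) that, so long as the disturbance from the left end occupies only positions $v_1, \ldots, v_{2k+1}$ and the disturbance from the right occupies only positions $v_{n-2k}, \ldots, v_n$ with these index sets disjoint, the left and right ends of $P_n$ evolve identically to the corresponding prefix/suffix of the sequence in Lemma~\ref{lem:infinitePath}. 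This is because, under this separation, no vertex in the middle ``sees'' both disturbances during a single firing, so each side's local evolution is unaffected by the other.

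The key step is then to show that at $t = \mathrm{rad}(G) - 1$ the two disturbances meet at the centre and $c_t$ acquires property plus. Writing out $c_{\mathrm{rad}(G)-1}$ explicitly using Lemma~\ref{lem:infinitePath} (splitting into cases $n = 2m+1$ and $n = 2m$ according to whether the central object of $P_n$ is a vertex or an edge), I would verify edge by edge that
\[
c_t(u) < c_t(v) \;\Longrightarrow\; c_t(u) + \Delta_t(u) > c_t(v) + \Delta_t(v),
\]
and analogously for equality. Once property plus is confirmed, Theorem~\ref{thm:propertyplus} immediately yields $c_{\mathrm{rad}(G)-1} = c_{\mathrm{rad}(G)+1}$, i.e.~period $2$ with pre-period at most $\mathrm{rad}(G) - 1$. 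To rule out a shorter pre-period, I would note that for each $t < \mathrm{rad}(G) - 1$ the inductive description still contains a ``plateau'' of $2$'s in the middle adjacent to a $2$ and a $1$ on the disturbed side, violating condition~(1) of property plus, so no earlier $c_t$ is part of a period-$2$ orbit.

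The main obstacle I anticipate is handling the two parities of $n$ uniformly: the structure of $c_{\mathrm{rad}(G)-1}$ and $c_{\mathrm{rad}(G)}$ near the centre depends on whether $n$ is odd (two disturbances share a single central vertex) or even (the two disturbances are reflected across the central edge), and each case requires its own small end-game computation to confirm property plus. A secondary nuisance is bookkeeping the inductive step of Lemma~\ref{lem:infinitePath} in the finite setting: I must check that the recurrence used to advance the pattern $(13)^k 1 2 2\cdots$ on $P_\infty$ remains valid on $P_n$ for exactly the steps preceding the collision, which amounts to verifying that the rightmost disturbed vertex on the left does not yet interact with the leftmost disturbed vertex on the right before step $\mathrm{rad}(G)-1$.
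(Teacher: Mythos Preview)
Your plan is essentially the paper's own argument: use Lemma~\ref{lem:infinitePath} locally at each end until the two fronts meet, write down $c_{\mathrm{rad}(G)-1}$ explicitly, check property plus, and invoke Theorem~\ref{thm:propertyplus}. Two small remarks: the paper in fact splits into four cases according to $n \bmod 4$ (because the form of the infinite-path word in Lemma~\ref{lem:infinitePath} depends on the parity of $t=\mathrm{rad}(G)-1$, not just on the parity of $n$), so your two parity cases will each bifurcate when you actually write the words down; and your final paragraph on ruling out a shorter pre-period is an addition---the paper's proof establishes only the upper bound on the pre-period length.
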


\begin{proof}
	The result is trivially true for~\begin{math}n = 1,2\end{math}. Let~\begin{math}P_n = u_1,u_2, \dots u_n\end{math} be the path on~\begin{math}n\geq 3\end{math} vertices and let~\begin{math}c_0\end{math} be the full-degree configuration. Note that for all~\begin{math}n\geq 3\end{math}, we have~\begin{math}rad(G)-1 = \lceil \frac{n-3}{2}\rceil\end{math}. 
	Since the dynamics are entirely local, for fixed~\begin{math}n\geq 3\end{math} and~\begin{math}t < \lceil \frac{n-3}{2}  \rceil\end{math} the behaviour on each side of the path follows the behaviour of the process on the infinite one-way path with full-degree configuration. 
	For fixed~\begin{math}n\geq 3\end{math} and~\begin{math}t = \lceil\frac{n-3}{2}\rceil\end{math} we have the following four cases ($k \geq 0$) following from Lemma \ref{lem:infinitePath}.
	
\begin{center}
	\begin{tabular}{|c|c|c|}
		\hline 
		~\begin{math}\mathbf{n}\end{math}&~\begin{math}\mathbf{t}\end{math} &	~\begin{math}\mathbf{\left\lbrace c_{t}(u_i)\right\rbrace_{i=1}^{i=n}}\end{math}\\
		\hline 
		~\begin{math}4k+3\end{math} &~\begin{math}2k\end{math} &~\begin{math}\mathit{   (13)^k121(31)^k}\end{math}  \\ 
		\hline
		~\begin{math}4k+4\end{math} &~\begin{math}2k\end{math} &~\begin{math}\mathit{   (13)^k1221(31)^k}\end{math}  \\ 
		\hline
		~\begin{math}4k+5\end{math} &~\begin{math}2k+1\end{math} & ~\begin{math}\mathit{2(13)^k121(31)^k2}\end{math}\\
		\hline
		~\begin{math}4k+6\end{math} &~\begin{math}2k+1\end{math} & ~\begin{math}\mathit{2(13)^k1221(31)^k2}\end{math}\\
		\hline 
	\end{tabular} 
	\end{center}

In each of these cases it is easily checked that~\begin{math}c_{\lceil\frac{n-3}{2}\rceil} = c_{rad(G) -1}\end{math} has property plus. The result now follows from Theorem \ref{thm:propertyplus}.\end{proof}

Using property plus, we may also examine the behaviour of the process on the complete graph where each of the vertices has one of two initial values.

\begin{theorem}  On a complete graph~\begin{math}K_n\end{math} for~\begin{math}\alpha,\beta,d\in\mathbb{Z}\end{math} with~\begin{math}1\leq d\leq n\end{math}, if~\begin{math}d\end{math} of the vertices are initially assigned~\begin{math}\alpha\end{math}-many chips and the remaining~\begin{math}n-d\end{math} vertices are initially assigned~\begin{math}\beta\end{math}-many chips, then the configuration is tight.
\end{theorem}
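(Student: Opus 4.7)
The plan is to exploit the total symmetry of the initial configuration on $K_n$ to reduce the dynamics to a scalar recurrence, and then to read off period $2$ or fixation directly.

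Let $A$ be the set of $d$ vertices initially carrying $\alpha$ chips and $B$ the remaining $n-d$ vertices carrying $\beta$ chips. For any pair $u,v$ lying in the same class we have $N[u]=N[v]=V(K_n)$. The proof of Lemma \ref{lem:neighbourhood} (on which Corollary \ref{cor:equal} rests) already handles the $N[u]=N[v]$ case and shows that two vertices sharing a closed neighbourhood and agreeing at some step continue to agree at every subsequent step. A straightforward induction on $t$ therefore yields scalars $a_t,b_t$ with $c_t(v)=a_t$ for every $v\in A$ and $c_t(v)=b_t$ for every $v\in B$, and $a_0=\alpha$, $b_0=\beta$.

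With the configuration encoded by the pair $(a_t,b_t)$, Equation (\ref{ez}) collapses to a scalar recurrence. When $a_t>b_t$, each vertex of $A$ sends one chip to each of the $n-d$ vertices of $B$ while each vertex of $B$ receives one chip from each of the $d$ vertices of $A$, so $a_{t+1}=a_t-(n-d)$ and $b_{t+1}=b_t+d$; the symmetric statement holds when $a_t<b_t$. Writing $\delta_t=a_t-b_t$, this becomes
\begin{equation*}
\delta_{t+1} \;=\; \delta_t \,-\, n\cdot\mathrm{sgn}(\delta_t),
\end{equation*}
with the convention $\mathrm{sgn}(0)=0$.

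From here the argument is a short dynamical-systems check. The magnitude $|\delta_t|$ drops by exactly $n$ at each step while $|\delta_t|\geq n$, so after $\lfloor |\delta_0|/n\rfloor$ steps we reach some time $T$ with $|\delta_T|<n$. Either $\delta_T=0$, in which case $c_T$ is a fixed configuration, or $0<|\delta_T|<n$, in which case $\delta_{T+1}$ has the opposite sign and $\delta_{T+2}=\delta_{T+1}+n\cdot\mathrm{sgn}(\delta_T)=\delta_T$, yielding period $2$. In either case the process is tight; equivalently, $c_T$ has property plus (the only cross-class edges flip orientation between steps $T$ and $T+1$) and one may instead invoke Theorem \ref{thm:propertyplus}. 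The only potential obstacle is making the symmetry reduction to $a_t,b_t$ airtight, but once that is in place the remainder is essentially a one-line calculation.
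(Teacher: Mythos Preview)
Your proof is correct and follows essentially the same approach as the paper: both reduce to two classes via the $N[u]=N[v]$ symmetry (you are right that Corollary~\ref{cor:equal} is only stated for open neighbourhoods, but the extension from Lemma~\ref{lem:neighbourhood} is immediate, and the paper itself tacitly uses this extension), then track the scalar difference between the classes until its sign flips and invoke property plus/Theorem~\ref{thm:propertyplus}. Your explicit recurrence $\delta_{t+1}=\delta_t-n\cdot\mathrm{sgn}(\delta_t)$ and the direct verification $\delta_{T+2}=\delta_T$ make the endgame slightly more transparent than the paper's version, but the argument is the same in substance.
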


\begin{proof}  
	This claim is trivially true for~\begin{math}n = 1,2\end{math}. Assume~\begin{math}n \geq 3\end{math}. Let~\begin{math}A \subseteq V(K_n)\end{math} be the set of vertices initially with~\begin{math}\alpha\end{math} chips.  Let~\begin{math}B = V(K_n) \backslash A\end{math} be the set of vertices initially with~\begin{math}\beta\end{math} chips. 
	
As~\begin{math}K_n\end{math} is a complete graph, every vertex has the same closed neighbourhood. By Corollary~\ref{cor:equal}, at each step, all vertices of~\begin{math}A\end{math} have the same number of chips and at each step, all vertices of~\begin{math}B\end{math}  have the same number of chips.

Let~\begin{math}a \in A\end{math} and~\begin{math}b \in B\end{math}. Observe that if~\begin{math}c_t(a) < c_t(b)\end{math}, then~\begin{math}\Delta_t(a) = |B|\end{math} and~\begin{math}\Delta_t(b) = -|A|\end{math} . From this we see that if~\begin{math}c_t(a) < c_t(b)\end{math} and~\begin{math}c_{t+1}(a) < c_{t+1}(b)\end{math}, then \begin{equation}0 < c_{t+1}(b) - c_{t+1}(a)  =  c_t(b) -|A | - c_t(a) - |B| <c_t(b) - c_t(a).\end{equation} 

From this it follows that there exists~\begin{math}t\end{math} such that~\begin{math}c_t(a) < c_t(b)\end{math} and~\begin{math}c_{t+1}(a) \geq c_{t+1}(b)\end{math}. If this relationship holds with equality, then~\begin{math}c_{t+1}\end{math} is fixed. Otherwise, observe that for all~\begin{math}a \in A\end{math} and~\begin{math}b \in B\end{math} we have~\begin{math}c_t(b) < c_t(a)\end{math} and~\begin{math}c_{t+1}(b) > c_{t+1}(a)\end{math}. The result holds by Theorem \ref{thm:propertyplus}.\end{proof}

Note that the above result will also hold for complete bipartite graphs where all vertices in one partite set initially receive~\begin{math}\beta\end{math} chips and the vertices in the other partite set receive~\begin{math}\alpha\end{math} chips.


\subsection{Periodicity of Stars}


We refer to the complete bipartite graph~\begin{math}K_{1,n-1}\end{math} for~\begin{math}n \geq 1\end{math} as a \emph{ star on $n $ vertices} (denoted~\begin{math}S_n\end{math}).  In this subsection we consider~\begin{math}S_n\end{math} on~\begin{math}n \geq 1\end{math} vertices.
We  show that given any finite initial chip configuration, the chip configuration is eventually tight. 

Denote the centre vertex of~\begin{math}S_n\end{math} by~\begin{math}v\end{math}.  For any finite initial configuration of chips, let~\begin{math}L_M\end{math} and~\begin{math}L_m\end{math} denote the set of leaves with the maximum and minimum number of chips in the initial configuration, respectively. 

By Corollary \ref{cor:equal} we have~\begin{math}c_t(\ell_1) = c_t(\ell_2)\end{math} for all~\begin{math}\ell_1, \ell_2\in L_M\end{math} (or~\begin{math}\ell_1, \ell_2\in L_m\end{math}) and all~\begin{math}t \geq 0\end{math}. As such, let~\begin{math}\ell_M\end{math}  denote any vertex in~\begin{math}L_M\end{math} and,~\begin{math}\ell_m\end{math}  denote any vertex in~\begin{math}L_m\end{math}.  Let \begin{displaymath}d_t = \max_{x,y \in V(S_n)\backslash\{v\}} |c_t(x) - c_t(y)|,\end{displaymath} the maximum difference between the number of chips at any two leaves at step~\begin{math}t\end{math}.

\begin{lemma}\label{lem:starLem}
	Let~\begin{math}c_0\end{math} be a finite chip configuration on~\begin{math}S_n\end{math}, with~\begin{math}n \geq 3\end{math}.
	
	\begin{enumerate}
		\item For all~\begin{math}t \geq 0\end{math}, we have~\begin{math}d_t = c_t(\ell_M) - c_t(\ell_m)\end{math}.
		\item If~\begin{math}c_t(\ell_m) \leq  c_t(v) \leq c_t(\ell_M)\end{math}  and one of these inequalities is strict, then~\begin{math}d_{t+1} < d_t\end{math}.
		\item If~\begin{math}c_t(v) < c_t(\ell_m)\end{math}, then there exists~\begin{math}r\geq t\end{math} such that~\begin{math}c_r(v) \geq c_r(\ell_m)\end{math}.
		\item If~\begin{math}c_t(v) <  c_t(\ell_m)\end{math} and~\begin{math}c_{t+1}(v) > c_{t+1}(\ell_M)\end{math}, then~\begin{math}c_t\end{math} has property plus.
		\item If~\begin{math}d_t=0\end{math}, then there exists~\begin{math}r \geq t\end{math} such that~\begin{math}c_r\end{math} has property plus. 
	\end{enumerate}
	
\end{lemma}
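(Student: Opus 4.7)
The key observation is that in a star the dynamics at each leaf $\ell$ depend only on the sign of $c_t(\ell) - c_t(v)$: the value $c_{t+1}(\ell)$ equals $c_t(\ell) + 1$, $c_t(\ell)$, or $c_t(\ell) - 1$ according as $c_t(\ell)$ is less than, equal to, or greater than $c_t(v)$. To prove \textbf{(1)} I would show by induction on $t$ that the weak ordering of leaf values is preserved. The only nontrivial inductive case is when two leaves $\ell$ and $\ell'$ with $c_t(\ell) > c_t(\ell')$ straddle $c_t(v)$; integrality then forces $c_t(\ell) - c_t(\ell') \geq 2$, so the new gap $c_{t+1}(\ell) - c_{t+1}(\ell') = c_t(\ell) - c_t(\ell') - 2$ remains nonnegative. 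Combined with Corollary~\ref{cor:equal}, this guarantees that $\ell_M$ and $\ell_m$ remain leaves of maximum and minimum value at every step, proving (1).

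Parts \textbf{(2)}--\textbf{(4)} follow from short computations using (1). For (2), if $c_t(v) > c_t(\ell_m)$ then $c_{t+1}(\ell_m) = c_t(\ell_m) + 1$ while $c_{t+1}(\ell_M) \leq c_t(\ell_M)$, so $d_{t+1} \leq d_t - 1$; the subcase $c_t(v) < c_t(\ell_M)$ is symmetric. For (3), $c_t(v) < c_t(\ell_m)$ together with (1) forces every leaf to lie strictly above $v$, so $c_{t+1}(v) = c_t(v) + (n-1)$ and $c_{t+1}(\ell) = c_t(\ell) - 1$ for every leaf $\ell$; the positive quantity $c_r(\ell_m) - c_r(v)$ therefore strictly decreases by $n$ at every step on which it remains positive, so after finitely many steps $c_r(v) \geq c_r(\ell_m)$. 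For (4), the hypotheses together with (1) yield $c_{t+1}(v) > c_{t+1}(\ell_M) \geq c_{t+1}(\ell)$ for every leaf $\ell$; since the only edges of $S_n$ are centre-to-leaf edges and no equality edge exists at step $t$, both clauses of property plus are satisfied.

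The main work lies in \textbf{(5)}. If $d_t = 0$ then by (1) all leaves share a common value $k_t$, and by symmetry of their dynamics they remain mutually equal at every subsequent step. Let $f(r) = c_r(v) - c_r(\ell)$ for any leaf $\ell$ and any $r \geq t$; a direct computation gives $f(r+1) = f(r) - n$ when $f(r) > 0$, $f(r+1) = f(r) + n$ when $f(r) < 0$, and the configuration is fixed when $f(r) = 0$. Since $|f|$ strictly decreases by $n$ at every non-fixed step, there exists a smallest $r \geq t$ with $|f(r)| < n$. If $f(r) = 0$ the configuration is fixed and trivially has property plus. If $0 < f(r) < n$, then at step $r$ the centre strictly exceeds every leaf while $f(r+1) = f(r) - n < 0$ flips the strict inequality, so the first clause of property plus holds on every centre-to-leaf edge; the absence of equality edges at step $r$ makes the second clause vacuous. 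The case $-n < f(r) < 0$ is symmetric. The main obstacle will be keeping the bookkeeping in (5) clean enough to certify that the iteration terminates at a step genuinely witnessing property plus, rather than at one where an accidental equality edge blocks the argument.
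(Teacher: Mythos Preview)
Your argument is correct and follows essentially the same route as the paper: order preservation among leaves for (1), and direct $\Delta_t$ computations at the extremal leaves and centre for (2)--(5), the only cosmetic difference being that the paper derives (5) by invoking (3) and (4) rather than tracking your function $f$ explicitly. One wording nit: the claim ``$|f|$ strictly decreases by $n$ at every non-fixed step'' is only true while $|f(r)| \geq n$ (once $|f| < n$ it can increase), but since you use it solely to reach the first $r$ with $|f(r)| < n$, the conclusion is unaffected.
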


\begin{proof}
	Let~\begin{math}c_0\end{math} be a finite chip configuration on~\begin{math}S_n\end{math}.
	
	\begin{enumerate}
		\item We show for all~\begin{math}\ell_1,\ell_2 \neq v\end{math} that if~\begin{math}c_t(\ell_1) \leq c_t(\ell_2)\end{math}, then~\begin{math}c_{t+1}(\ell_1) \leq c_{t+1}(\ell_2)\end{math}.  
		If~\begin{math}c_t(\ell_2)  < c_t(v)\end{math}, then~\begin{math}c_{t+1}(\ell_1) = c_t(\ell_1) + 1\end{math} and~\begin{math}c_{t+1}(\ell_2) = c_t(\ell_2) + 1\end{math} and the result holds. Similarly, the result holds if~\begin{math}c_t(\ell_1)  > c_t(v)\end{math}.
		Consider~\begin{math}c_t(\ell_1) \leq c_t(v) \leq c_t(\ell_2)\end{math}. 
		If neither of these equalities is strict, then the result holds as~\begin{math}c_t(\ell_1) = c_{t+1}(\ell_1)\end{math} and~\begin{math}c_t(\ell_2) = c_{t+1}(\ell_2)\end{math}.
		If~\begin{math}c_t(\ell_1) < c_t(v) \leq c_t(\ell_2)\end{math}, then~\begin{math}c_{t+1}(\ell_1) = c_{t}(\ell_1)+1\end{math} and~\begin{math}c_{t+1}(\ell_2) \geq c_{t+1}(\ell_2) -1\end{math}, and the result holds. 
		Similarly if~\begin{math}c_t(\ell_1) \leq c_t(v) < c_t(\ell_2)\end{math}, the result holds.
		
		\item Without loss of generality, assume~\begin{math}c_t(\ell_m) <  c_t(v) \leq c_t(\ell_M)\end{math}. In this case we have~\begin{math}\Delta_t(\ell_m) = 1\end{math} and~\begin{math}\Delta_t(\ell_M) \leq 0\end{math}. The result now follows from part 1.
		
		\item  If~\begin{math}c_t(v) < c_t(\ell_m)\end{math}, then~\begin{math}c_t(u) > c_t(v)\end{math} for all~\begin{math}u \neq v\end{math},~\begin{math}\Delta_t(u)=-1\end{math} for all~\begin{math}u \neq v\end{math} and~\begin{math}\Delta_{t}(v) = deg(v) > 1\end{math}.
		Therefore~\begin{math}c_{t+1}(\ell_m) - c_{t+1}(v) < c_t(\ell_m) - c_t(v)\end{math}. 
		Since~\begin{math}c_t(\ell_m) - c_t(v) > 0\end{math} and all values are integral, repeating this argument eventually yields some~\begin{math}i \geq 1\end{math} so that~\begin{math}c_{t+i}(\ell_m) - c_{t+i}(v) \leq 0\end{math}. Setting~\begin{math}r = t+i\end{math} gives the desired result.
		
		\item If~\begin{math}c_t(v) < c_t(\ell_m)\end{math}, then~\begin{math}c_t(u) > c_t(v)\end{math} for all~\begin{math}u \neq v\end{math}, and~\begin{math}\Delta_t(u)=-1\end{math} for all~\begin{math}u \neq v\end{math}. 
		Therefore~\begin{math}c_t(x)+\Delta_t(x) > c_t(y)+\Delta_t(y)\end{math} holds for all~\begin{math}xy \in E(G)\end{math} such that~\begin{math}c_t(x)~<~c_t(y)\end{math}. 
		If~\begin{math}c_{t+1}(v) > c_{t+1}(\ell_M)\end{math}, then~\begin{math}c_{t+1}(u) > c_{t+1}(v)\end{math} for all~\begin{math}u \neq v\end{math} and a similar reasoning applies, therefore~\begin{math}c_t\end{math} has property plus
	
		\item Assume~\begin{math}d_t=0\end{math}. If~\begin{math}c_t(v) = c_t(\ell_m)\end{math}, then~\begin{math}c_t(u) = c_t(v)\end{math} for all~\begin{math}v \in V(G)\end{math}.  In this case we note that~\begin{math}\Delta_t(x) = 0\end{math} for all~\begin{math}x \in V(G)\end{math} and so~\begin{math}c_t\end{math} has property plus. Otherwise, we may assume without loss of generality that~\begin{math}c_t(v) < c_t(\ell_m)= c_t(\ell_M)\end{math}. By part 3 there exists~\begin{math}r\end{math} such that~\begin{math}c_r(v) \geq c_r(\ell_m) = c_t(\ell_M)\end{math}. If this inequality is strict then our claim holds by part 4. Otherwise our claim holds by the previous case in part 5.\end{enumerate}\end{proof}

\begin{theorem}
	For any finite initial chip configuration on~\begin{math}S_n\end{math}, the chip configuration is tight with pre-period length at most   \begin{displaymath}\left\lceil\frac{\max \left\lbrace 0, c_0(v) - c_0(\ell_M), c_0(\ell_m) - c_0(v) \right \rbrace}{n}\right\rceil + 2d_0.\end{displaymath}
\end{theorem}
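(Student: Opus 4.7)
The plan is to split the analysis into two phases matching the two summands of the bound. Writing $V_t := c_t(v)$, $M_t := c_t(\ell_M)$, $m_t := c_t(\ell_m)$, and $\tau := \lceil \max\{0, V_0 - M_0, m_0 - V_0\}/n \rceil$, Phase 1 would drive $V_t$ into the range $[m_t, M_t]$ in at most $\tau$ steps, and Phase 2 would show that from any in-range configuration, property plus is achieved within $2d_0$ further steps. Theorem~\ref{thm:propertyplus} then upgrades property plus into a pre-period bound, giving the stated total of $\tau + 2d_0$.

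For Phase 1, I would treat the representative case $V_0 > M_0$; the case $V_0 < m_0$ is symmetric and $V_0 \in [m_0, M_0]$ gives $\tau = 0$. While $V_t > M_t$, vertex $v$ lies strictly above every leaf and sends one chip to each, so $V_{t+1} - M_{t+1} = V_t - M_t - n$ while $d_t$ is preserved (Corollary~\ref{cor:equal} keeps $\ell_M$ and $\ell_m$ well-defined throughout). After $\tau$ steps the gap becomes non-positive, so $V_\tau \leq M_\tau$. Either $V_\tau \geq m_\tau$ and Phase~2 begins, or $V_\tau < m_\tau$; in the latter case $v$ has jumped across the entire range in one step, triggering the symmetric analogue of Lemma~\ref{lem:starLem}(4) (proved identically by exchanging the roles of maximum and minimum) to give that $c_{\tau - 1}$ already has property plus, which finishes the argument with a pre-period smaller than the claimed bound.

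For Phase 2, I would define the subsequence of \emph{in-range times} $t_1 = \tau < t_2 < \cdots$ at which $V_{t_i} \in [m_{t_i}, M_{t_i}]$. Lemma~\ref{lem:starLem}(2) then yields $d_{t_{i+1}} \leq d_{t_i} - 1$ as long as property plus has not yet held. The crucial auxiliary claim is $t_{i+1} - t_i \leq 2$: if step $t_i + 1$ lies above the range, the bounds $V_{t_i} \leq M_{t_i}$ and $|\Delta_{t_i}(v)| \leq n-1$ force $V_{t_i+1} - M_{t_i+1} \leq n$, hence $V_{t_i+2} \leq M_{t_i+2}$; and if additionally $V_{t_i+2} < m_{t_i+2}$, then applying the symmetric version of Lemma~\ref{lem:starLem}(4) at step $t_i+1$ produces property plus and terminates the process, while otherwise $t_i+2$ is the next in-range time. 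Iterating, the values $d_{t_1} = d_0, d_{t_2}, \ldots$ reach $0$ by index $d_0 + 1$, at which point the in-range configuration has $V = m = M$ and satisfies property plus trivially. Since $t_{d_0+1} - \tau \leq 2d_0$, the pre-period is at most $\tau + 2d_0$.

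The main obstacle is the Phase 2 bookkeeping: verifying that each excursion of $V$ outside $[m_t, M_t]$ lasts only one step before either returning to the range or triggering property plus via Lemma~\ref{lem:starLem}(4), and handling boundary cases where $V_t$ equals $m_t$ or $M_t$ rather than lying strictly between them. Writing out the symmetric counterpart of Lemma~\ref{lem:starLem}(4) explicitly and checking that the in-range sequence $(t_i)$ actually extends as claimed (rather than being truncated by some unanticipated two-step out-of-range stretch) are the remaining details to be nailed down.
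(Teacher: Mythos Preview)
Your proposal is correct and follows essentially the same two-phase strategy as the paper's own proof: first bring $c_t(v)$ into the interval $[c_t(\ell_m), c_t(\ell_M)]$ within $\tau$ steps, then show that each pair of subsequent steps either achieves property plus or strictly decreases $d_t$, giving at most $2d_0$ further steps. Your treatment of the Phase~1 overshoot case (where $V_\tau$ jumps all the way past $m_\tau$, handled via the symmetric version of Lemma~\ref{lem:starLem}(4)) is in fact more explicit than the paper's, which simply asserts that the centre lands in range without addressing this possibility; your framing in terms of a sequence of in-range times $t_1 < t_2 < \cdots$ with $t_{i+1} - t_i \le 2$ makes the Phase~2 bookkeeping cleaner than the paper's more ad hoc case split, but the underlying argument is the same.
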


\begin{proof}
	Let~\begin{math}c_0\end{math} be an initial chip configuration on~\begin{math}S_n\end{math} for~\begin{math}n \geq 1\end{math}. We note that the claim is trivially true for~\begin{math}n \leq 2\end{math}. 
	Thus we proceed and assume~\begin{math}n \geq 3\end{math}. 
	
	If~\begin{math}c_0(v) <  c_0(\ell_m)\end{math}, then by part 3 of Lemma \ref{lem:starLem}, there exists a least~\begin{math}t>0\end{math} such that~\begin{math}c_{t-1}(v)~<~c_{t-1}(\ell_m)\end{math} and~\begin{math}c_{t}(v) \geq c_{t}(\ell_m)\end{math}. 
	Observe that if~\begin{math}c_i(v) < c_i(\ell_m)\end{math}, then~\begin{math}c_{i+1}(v) = c_i(v) + deg(v)\end{math} and~\begin{math}c_{i+1}(u) = c_i(u) -1\end{math} for all~\begin{math}u \neq v\end{math}. 
	Therefore~\begin{math}t \leq \lceil\frac{c_0(\ell_m) - c_0(v)}{deg(v)+1}\rceil\end{math}.
	Similarly, if~\begin{math}c_0(v) >  c_0(\ell_M)\end{math}, then there exists~\begin{math}t\end{math} such that~\begin{math}c_{t-1}(v) > c_{t-1}(\ell_M)\end{math} and~\begin{math}c_{t}(v) \leq c_{t}(\ell_M)\end{math}, where~\begin{math}t \leq \lceil\frac{c_0(v) -c_0(\ell_M) }{deg(v)+1}\rceil\end{math}.
	
	Since~\begin{math}c_i(\ell_m) \leq c_i(\ell_M)\end{math} for all~\begin{math}i \geq 0\end{math}, there exists~\begin{math}t \leq \left\lceil\frac{\max \left\lbrace 0, c_0(v) - c_0(\ell_M), c_0(\ell_m) - c_0(v) \right \rbrace}{deg(v) +1}\right\rceil\end{math} such that~\begin{math}c_t(\ell_m) \leq c_t(v) \leq c_t(\ell_M)\end{math}.
	If neither of these equalities is strict, then~\begin{math}d_t=0\end{math} and by part 5 of Lemma \ref{lem:starLem},~\begin{math}c_t\end{math} has property plus.
	
	Otherwise, assume without loss of generality that~\begin{math}c_t(v) < c_t(\ell_M)\end{math}. By part 2 of Lemma \ref{lem:starLem} we have~\begin{math}d_{t+1} < d_t\end{math}. 
	If~\begin{math}c_{t+1}(v) > c_{t+1}(\ell_M)\end{math}, then~\begin{math}c_{t+1}(v) - c_{t+1}(\ell_M) < deg(v)\end{math}. 
	Therefore~\begin{math}c_{t+2}(v) \leq c_{t+2}(\ell_M)\end{math}. 
	If~\begin{math}c_{t+2}(v) < c_{t+2}(\ell_m)\end{math}, then by part 4 of Lemma \ref{lem:starLem}~\begin{math}c_{t+2}\end{math} has property plus. 
	Otherwise~\begin{math}c_{t+2}(\ell_m) \leq c_{t+2}(v) \leq c_{t+2}(\ell_M)\end{math} and~\begin{math}d_{t+2} = d_{t+1}\end{math}.
	Therefore if~\begin{math}c_t(\ell_m) \leq c_t(v) \leq c_t(\ell_M)\end{math}, then either~\begin{math}c_t\end{math} has property plus,~\begin{math}c_{t+2}\end{math} has property plus, or~\begin{math}d_{t+2} \leq d_t-1\end{math}. Recall from part 5 of Lemma \ref{lem:starLem} that if~\begin{math}d_i=0\end{math}, then~\begin{math}c_i\end{math} has property plus.
	Since~\begin{math}d_t \leq d_0\end{math} for all~\begin{math}t > 0\end{math}, we have that if~\begin{math}c_t(\ell_m) \leq c_t(v) \leq c_t(\ell_M)\end{math}, then the process is periodic after at most an additional~\begin{math}2d_0\end{math} steps.

	Therefore 	for any finite initial chip configuration on~\begin{math}S_n\end{math}, the chip configuration is tight with pre-period length at most  
	\begin{equation}\left\lceil\frac{\max \left\lbrace 0, c_0(v) - c_0(\ell_M), c_0(\ell_m) - c_0(v) \right \rbrace}{deg(v) +1}\right\rceil + 2d_0.\end{equation}
	\end{proof}

\subsection{Mill-pond configurations}

A  configuration in which there exists~\begin{math}v \in V(G)\end{math} such that~\begin{math}c_0(v)=1\end{math} and for all vertices~\begin{math}x\neq v\end{math},~\begin{math}c_0(x)=0\end{math} is called a \emph{mill-pond}\footnote{We are dropping a single chip into a perfectly flat pond and watching what happens to the waves.}, denote this by~\begin{math}
MP(v) \end{math}.
Recall that for a graph~\begin{math}G\end{math} and a vertex~\begin{math}u \in V(G)\end{math}, the \emph{eccentricity of~\begin{math}u\end{math}}, denoted~\begin{math}\epsilon(u)\end{math}, is given by \begin{displaymath}\epsilon(u) = \max\{d(u,v)| v \in V(G)\},\end{displaymath} where~\begin{math}d(u,v)\end{math} is the length of the shortest path from~\begin{math}u\end{math} to~\begin{math}v\end{math}.
 We show that if~\begin{math}G\end{math} is bipartite, starting 
 with~\begin{math}
 MP(v)
 \end{math} for any~\begin{math}v\end{math}, then the process is periodic with period length~\begin{math}2\end{math} after a pre-period of length
~\begin{math}\epsilon(v)\end{math}. In particular,
the pre-period is at most the diameter of~\begin{math}G\end{math}.

For~\begin{math}i\geq 0\end{math}, let~\begin{math}N_i(v)\end{math} be the set of vertices at distance~\begin{math}i\end{math} from~\begin{math}v\end{math}.   For~\begin{math}x\in N_i(v)\end{math}, where~\begin{math}i\geq 1\end{math}, set
\begin{math}deg^+(x)=|N(x)\cap N_{i-1}(v)|\end{math}. Similarly for~\begin{math}x\in N_i(v)\end{math}, where~\begin{math}i\geq 0\end{math}, set~\begin{math}deg^-(x)=|N(x)\cap N_{i+1}(v)|\end{math}. Note that we define~\begin{math}N_0(v) = \{v\}\end{math} and also note that~\begin{math}deg^+(v) = 0\end{math}.

\begin{theorem}\label{thm:millpond} Let~\begin{math}G\end{math} be a connected bipartite
 graph,~\begin{math}v\in V(G)\end{math} and~\begin{math}c_0 = MP(v)\end{math}.  
  If~\begin{math}t\end{math} is even, then~\begin{math}c_t(v) =1\end{math}, otherwise~\begin{math}c_t(v) = 1- deg(v)\end{math}. 
Further for~\begin{math}x\in N_i(v)\end{math} and~\begin{math}i=1,2,\ldots,\epsilon(v)\end{math}, we have

\begin{displaymath}
c_t(x)=
\begin{cases} 0& \mbox{\begin{math}t<i\end{math}}\\
deg^+(x)& \mbox{\begin{math}t\geq i\end{math} and~\begin{math}t-i\equiv 0 \mod 2\end{math}}     \\
-deg^-(x)&\mbox{\begin{math}t\geq i\end{math} and~\begin{math}t-i\equiv 1 \mod 2\end{math}} 
\end{cases}
\end{displaymath}

\end{theorem}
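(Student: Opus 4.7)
The plan is to prove the claim by induction on $t$. The base case $t=0$ is immediate from $c_0 = MP(v)$: the vertex $v \in N_0(v)$ has $c_0(v)=1$, matching the even-$t$ branch at $v$, and every $x\in N_i(v)$ with $i\geq 1$ has $c_0(x)=0$, matching the case $t<i$.

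For the inductive step, assume the formula holds at time $t$ and compute $c_{t+1}(u) = c_t(u) + \Delta_t(u)$ for each $u$. The only place bipartiteness enters is the structural fact that for any $x \in N_i(v)$ every neighbour of $x$ lies in $N_{i-1}(v)\cup N_{i+1}(v)$, since an edge within $N_i(v)$ would close an odd cycle through $v$. Together with the inductive hypothesis this pins down $c_t(w)$ for every $w \in N(x)$. The critical numerical observation is that for every $y \in N_j(v)$ with $j \geq 1$ one has $\deg^+(y)\geq 1$ and $-\deg^-(y)\leq 0$, so whenever two adjacent vertices are both ``active'' at time $t$ (i.e.\ $t$ is at least each of their level indices) their formula values lie strictly on opposite sides of $0$, with the positive value on the level whose index has the same parity as $t$. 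This lets us read off the direction of every chip exchange at time $t$ directly from the formula.

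Given this, a short case analysis on $x \in N_i(v)$ and the parity of $t-i$ computes $\Delta_t(x)$: if $t<i-1$, then $x$ and all its neighbours are inactive and $c_{t+1}(x)=0$; if $t=i-1$, then $x$ receives exactly from its $\deg^+(x)$ neighbours in $N_{i-1}(v)$ (its neighbours in $N_{i+1}(v)$ still being inactive), giving $c_i(x)=\deg^+(x)$; if $t\geq i$ and $t-i$ is even, then $c_t(x)=\deg^+(x)$ is strictly above every neighbour, so $x$ sends on every incident edge and $c_{t+1}(x)=\deg^+(x)-\deg(x) = -\deg^-(x)$; the symmetric case $t-i$ odd gives $c_{t+1}(x)=\deg^+(x)$. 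The vertex $v$ is handled identically: since $1$ is strictly above any $-\deg^-(x)\leq 0$ and strictly below any $\deg^+(x)\geq 1$, the value $c_t(v)$ flips between $1$ and $1-\deg(v)$ on alternate steps.

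The main obstacle is the bookkeeping at the three boundary regimes: $i=1$, where $N_{i-1}(v)=\{v\}$ and the $v$-formula must be substituted for the $\deg^\pm$ formulas; $i=\epsilon(v)$, where $N_{i+1}(v)=\emptyset$ and $\deg^-(x)=0$, so one branch of the alternation collapses to $0$; and the moving wavefront transition $t=i-1 \to t=i$, where a vertex leaves the all-zero regime for the first time. In each of these, however, the sign observation above forces all the required strict inequalities to hold automatically, so bipartiteness together with the base case really is enough to carry the induction through.
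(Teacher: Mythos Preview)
Your proposal is correct and follows essentially the same approach as the paper: induction on $t$, using bipartiteness to ensure every neighbour of a level-$i$ vertex lies in levels $i-1$ or $i+1$, and using the sign observation $\deg^+(\cdot)\geq 1$, $-\deg^-(\cdot)\leq 0$ to determine the direction of every chip exchange. If anything, your handling of the boundary regimes ($i=1$, $i=\epsilon(v)$, and the wavefront $t=i-1$) is more explicit than the paper's, which simply asserts the analogous inequalities; the phrase ``strictly on opposite sides of $0$'' is slightly imprecise when $\deg^-(x)=0$, but you correctly note that this only occurs when $N_{i+1}(v)=\emptyset$, so no comparison is needed there.
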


\begin{proof}  Let~\begin{math}G\end{math} be a connected bipartite graph,~\begin{math}v\in V(G)\end{math} 
and~\begin{math}c_0\end{math} be $MP(v)$.   Note that for a bipartite graph,~\begin{math}deg(x) = deg^+(x)+deg^-(x)\end{math}.
We proceed by induction on~\begin{math}t\end{math}, noting that our claim is true for~\begin{math}t=0\end{math} by the definition of a mill-pond configuration.

First consider~\begin{math}v\end{math}. If~\begin{math}t\end{math} is odd then, by induction,~\begin{math}c_{t-1}(v) = 1\end{math} and, for~\begin{math}x\in N(v)\end{math},~\begin{math}c_t(x)\leq 0\end{math} (note that ~\begin{math}c_t(x)=0\end{math} if~\begin{math}t=1\end{math} and~\begin{math}c_t(x) = -deg^-(x)\end{math} otherwise)
 and so~\begin{math}c_{t}(v) = 1-deg(v)\end{math}. If~\begin{math}t\end{math} is even
then~\begin{math}c_{t-1}(v) = 1-deg(v)\leq 0\end{math} and, for~\begin{math}x\in N(v)\end{math},~\begin{math}c_{t-1}(x) = deg^+(x)> 0\end{math}. Thus~\begin{math}c_{t}(v) = 1\end{math}.

Consider~\begin{math}x_i\in N_i(v)\end{math},~\begin{math}i>0\end{math}. If~\begin{math}t<i\end{math}, then clearly~\begin{math}c_t(x_i) = 0\end{math} . Assume~\begin{math}t\geq i\end{math}.
In the case~\begin{math}t=i\end{math}, all vertices in~\begin{math}N_{i-1}(v)\end{math} have at least one chip at time~\begin{math}t-1\end{math}
and all vertices in~\begin{math}N_{j}(v)\end{math},~\begin{math}j\geq i\end{math}, have none. Therefore,~\begin{math}c_t(x_i) = deg^+(x_i)\end{math}.

We may now assume~\begin{math}t-i>0\end{math}. 

First suppose~\begin{math}t-i\end{math} is even.
By induction,~\begin{math}c_{t-1}(x_i) = -deg^-(x_i)<0\end{math}.
Consider~\begin{math}y\sim x_i\end{math}. If~\begin{math}y\in N_{i-1}\end{math}, then~\begin{math}c_{t-1}(y) = deg^+(y)>0\end{math}.   If~\begin{math}y\in N_{i+1}\end{math} 
then~\begin{math}c_{t-1}(y)= deg^+(y)\end{math} or~\begin{math}c_{t-1}(y)=0\end{math} if~\begin{math}t-i=1\end{math}.  In either case,~\begin{math}c_{t-1}(y)\geq 0\end{math}. Thus~\begin{math}c_{t}(x_i) = -deg^-(x_i) +deg(x_i) = deg^+(x_i)\end{math}.

The argument  for~\begin{math}t-i\end{math} odd follows similarly.

\end{proof}

Table \ref{table:seq} gives the first few configurations for a mill-pond configuration,  
on a bipartite graph where~\begin{math}v\end{math} has eccentricity,~\begin{math}\epsilon(v)\end{math}, at least~\begin{math}5\end{math}.

\begin{table}
\begin{tabular}{c|ccccccc}
~\begin{math}t\end{math}&~\begin{math}v\end{math}&~\begin{math}x_1\in N_1\end{math}&~\begin{math}x_2\in N_2\end{math}&~\begin{math}x_3\in N_3\end{math}&~\begin{math}x_4\in N_4\end{math}&~\begin{math}x_5\in N_5\end{math}\\
\hline
0&1&0&0&0&0&0\\
1&~\begin{math}1-deg(v)\end{math}&1&0&0&0&0\\
2&1&~\begin{math}1-deg(x_1)\end{math}&~\begin{math}\deg^+(x_2)\end{math}&0&0&0\\
3&~\begin{math}1-deg(v)\end{math}&1&~\begin{math}-deg^-(x_2)\end{math}&~\begin{math}deg^+(x_3)\end{math}&0&0\\
4&1&~\begin{math}1-deg(x_1)\end{math}&~\begin{math}deg^+(x_2)\end{math}&~\begin{math}deg^-(x_3)\end{math}&~\begin{math}deg^+(x_4)\end{math}&0\\
\end{tabular}
\caption{Configurations for the first $4$time steps.}
\label{table:seq}
\end{table}

After~\begin{math}t=\epsilon(v)\end{math} steps, the arguments for all the cases still hold, except no vertices, other than possibly~\begin{math}v\end{math}, have~\begin{math}0\end{math} chips. This proves the following. 

\begin{corollary}\label{cor:unichip} Let~\begin{math}G\end{math} be a  
bipartite
graph,~\begin{math}v\in V(G)\end{math} and~\begin{math}c_0\end{math} is $MP(v)$.  The configurations become periodic, with period length~\begin{math}2\end{math}, after a preperiod of length~\begin{math}\epsilon(v)-1\end{math}.
\end{corollary}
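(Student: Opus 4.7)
The plan is to derive the corollary directly from the explicit closed-form given in Theorem~\ref{thm:millpond} by verifying that $c_{\epsilon(v)-1}=c_{\epsilon(v)+1}$ pointwise. Once this equality holds, periodicity with period length $2$ and pre-period of length at most $\epsilon(v)-1$ follows immediately from the standard fact that a configuration together with the deterministic dynamics determines all future configurations.

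Concretely, I would split the vertex set as $V(G)=\{v\}\cup N_1(v)\cup\cdots\cup N_{\epsilon(v)}(v)$ and check each piece separately. For $v$ itself, Theorem~\ref{thm:millpond} gives a value depending only on the parity of $t$, and since $\epsilon(v)-1$ and $\epsilon(v)+1$ have the same parity, $c_{\epsilon(v)-1}(v)=c_{\epsilon(v)+1}(v)$. For $x\in N_i(v)$ with $1\le i\le\epsilon(v)-1$, both $\epsilon(v)-1$ and $\epsilon(v)+1$ are at least $i$, and they differ by $2$, so their values $(t-i)\bmod 2$ agree and the theorem supplies the same value ($\deg^+(x)$ or $-\deg^-(x)$) at both times.

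The only mildly delicate case, and the one I would expect to be the main obstacle in a careless write-up, is $x\in N_{\epsilon(v)}(v)$. Here $\epsilon(v)-1<i$, so Theorem~\ref{thm:millpond} gives $c_{\epsilon(v)-1}(x)=0$, while at time $t=\epsilon(v)+1$ we have $t-i=1$ odd, giving $c_{\epsilon(v)+1}(x)=-\deg^-(x)$. I would resolve this by observing that $N_{\epsilon(v)+1}(v)=\emptyset$ by definition of the eccentricity, hence $\deg^-(x)=|N(x)\cap N_{\epsilon(v)+1}(v)|=0$, so $c_{\epsilon(v)+1}(x)=0=c_{\epsilon(v)-1}(x)$.

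Combining the three cases gives $c_{\epsilon(v)-1}=c_{\epsilon(v)+1}$, and determinism of the dynamics then yields $c_{\epsilon(v)-1+k}=c_{\epsilon(v)+1+k}$ for all $k\ge 0$, establishing the claim. One could additionally check sharpness by noting that for any $x\in N_{\epsilon(v)}(v)$ we have $\deg^+(x)\ge 1$, so $c_{\epsilon(v)-2}(x)=0\ne\deg^+(x)=c_{\epsilon(v)}(x)$, which shows the pre-period $\epsilon(v)-1$ cannot be shortened uniformly over bipartite graphs.
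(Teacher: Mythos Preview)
Your proposal is correct and follows essentially the same approach as the paper: both derive the corollary directly from the explicit formulas in Theorem~\ref{thm:millpond}, observing that once $t\ge\epsilon(v)-1$ the values at every vertex are determined solely by parity. Your write-up is in fact more careful than the paper's one-line justification, in that you explicitly handle the boundary layer $N_{\epsilon(v)}(v)$ via $\deg^-(x)=0$ and also address sharpness of the pre-period.
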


We note that Corollary \ref{cor:unichip} does not hold in general. There exist examples on as few as~\begin{math}6\end{math} vertices of non-bipartite graphs with a mill-pond configuration in which the pre-period length exceeds~\begin{math}\epsilon(v)-1\end{math}. For example, the mill-pond configuration for the graph given in Figure \ref{Fig:UniChip} with~\begin{math}c_0(v) = 1\end{math} has pre-period length~\begin{math}6\end{math}; however,~\begin{math}\epsilon(v)-1 = 1\end{math}.

\begin{figure}[htbp]
	\[ \includegraphics[width=0.25\textwidth]{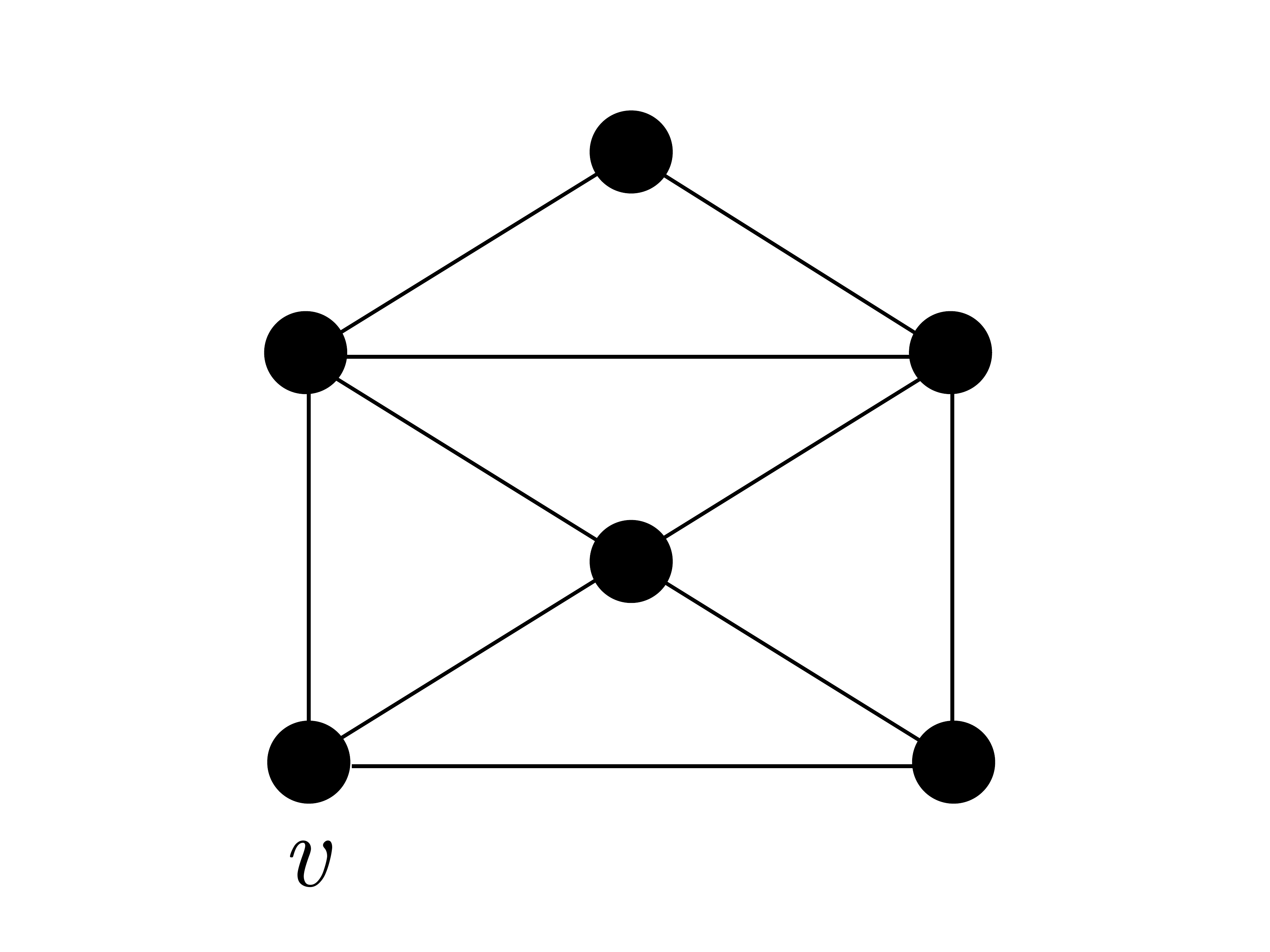} \]
	
	\vspace{-0.25in}\caption{A mill-pond configuration with pre-period length not equal to $\epsilon(v) -1$.}
	
	\label{Fig:UniChip}
\end{figure}


\section{Questions}\label{sec:questions}


Although we have determined that configurations are eventually periodic for some classes of graphs, we restate and comment on Questions~\ref{q1} and Conjecture~\ref{conj}.\\

\noindent {\bf Question~\ref{q1}}~\emph{ Let~\begin{math}G\end{math} be a finite  graph with an initial configuration.  Do the chip configurations on~\begin{math}G\end{math} eventually exhibit periodic behaviour? }\\

\noindent {\bf Conjecture~\ref{conj}}~\emph{ Every finite graph and initial configuration is tight.}\\

Although we have answered Question~\ref{q1} affirmatively for paths, cycles, wheels, stars, complete graphs, and complete bipartite graphs, the question remains open for general graphs. Conjecture~\ref{conj} was fully resolved for stars and some partial results were provided in Section~\ref{sec:period2}, but the conjecture remains open. By Theorem \ref{thm:propertyplus}, it is enough to show that for every graph and every initial chip configuration, the configuration eventually attains property plus.\\ 

If an initial configuration is known to be eventually periodic, then the initial configuration can be offset by a constant factor so that no vertex ever has a negative number of chips. This observation yields the following question.
  
\begin{question} Given~\begin{math}k\end{math} chips to be distributed on vertices of a graph~\begin{math}G\end{math} (and assuming eventual periodicity), how many different configurations will result in no vertex ever having a negative number of chips? 
\end{question} 

In the case that~\begin{math}G\end{math} is a finite bipartite graph in which each vertex begins with~\begin{math}0\end{math} chips, except for a single vertex,~\begin{math}v\end{math}, that begins with~\begin{math}1\end{math} chip, we have shown that the process is periodic with period length~\begin{math}2\end{math} after a pre-period of length at most~\begin{math}\epsilon(v)-1\end{math}. In periodic configurations, each vertex, other than~\begin{math}v\end{math},  either always has value~\begin{math}0\end{math} or oscillates between having positive value and negative value.  This leads to the following  questions.

\begin{question} Let~\begin{math}G\end{math} be a graph with a finite initial chip configuration.  Should the configurations eventually exhibit periodic behaviour, how many steps will it take for the process to become periodic? \end{question}

For a graph~\begin{math}G\end{math} and a fixed vertex~\begin{math}v\end{math}, consider the configuration~\begin{math}c_0(v) = -deg(v)\end{math}, and for~\begin{math}x\ne v\end{math},
~\begin{math}c_0(x) = 1\end{math} if~\begin{math}x\sim y\end{math}, and~\begin{math}c_0(x) = 0\end{math} otherwise. Such a configuration arises from the zero-configuration (\textit{i.e.,} the configuration where the value at each vertex is~\begin{math}0\end{math}) by initially allowing~\begin{math}v\end{math} to send one chip to each of its neighbours.
 Denote this by~\begin{math}QF(v)\end{math}.  
 If~\begin{math}N[v]=V(G)\end{math} and~\begin{math}|V(G)|=n\end{math} then at~\begin{math}t=1\end{math},~\begin{math}c_0(x) =0\end{math} for all~\begin{math}x\end{math} and~\begin{math}QF(v)\end{math} has turned into a zero-configuration. We ask when else does this occur.
 
The~\begin{math}QF(v)\end{math} configuration is just one chip 
different from the mill-pond configuration. For a bipartite graph,
 if~\begin{math}\epsilon(v)>1\end{math}, then it is easy to see that the statement of Theorem \ref{thm:millpond} also holds for~\begin{math}QF(v)\end{math}  except for the number of chips at~\begin{math}v\end{math} is always~\begin{math}1\end{math} fewer.
 
\begin{question}
Characterize those graphs~\begin{math}G\end{math} which have a vertex~\begin{math}v \end{math} such that~\begin{math} QF(v)\end{math} eventually becomes a zero-configuration.
\end{question}

\bibliographystyle{abbrvnat}
\bibliography{chipFireBib}

\end{document}